\newcommand\eat[1]{}
\def\setH{{\mathcal{H}}}
\def\setX{{\mathcal{X}}}
\def\setY{{\mathcal{Y}}}
\def\setZ{{\mathcal{Z}}}
\def\setP{{\mathcal{P}}}
\def\L{{\mathcal{L}}}
\def\P{{\mathcal{P}}}
\def\E{{\mathbb{E}}}
\DeclareMathOperator*{\argmin}{arg\, min}
\DeclareMathOperator*{\argmax}{arg\, max}
\newtheorem{theorem}{Theorem}[section]
\newtheorem{lemma}[theorem]{Lemma}
\newtheorem{definition}[theorem]{Definition}
\newtheorem{proposition}[theorem]{Proposition}
\newcommand{\eins}{\mathbf{1}}
\theoremstyle{remark}  %
\title{Estimating the Rate-Distortion Function by Wasserstein Gradient Descent}
\author{%
    Yibo Yang$^{1}$ \quad Stephan Eckstein$^{2}$ \quad Marcel Nutz$^{3}$ \quad Stephan Mandt$^{1}$\\
    $^1$University of California, Irvine \quad $^2$ETH Zurich \quad $^3$Columbia University \\
    \texttt{\{yibo.yang, mandt\}@uci.edu} \\
    \texttt{stephan.eckstein@math.ethz.ch} \\
    \texttt{mnutz@columbia.edu}
}
\begin{document}

\maketitle

\begin{abstract}

In the theory of lossy compression, the rate-distortion (R-D) function $R(D)$ describes how much a data source can be compressed (in bit-rate) at any given level of fidelity (distortion). Obtaining $R(D)$ for a given data source establishes the fundamental performance limit for all compression algorithms.  We propose a new method to estimate $R(D)$ from the perspective of optimal transport.
Unlike the classic Blahut--Arimoto algorithm which fixes the support of the reproduction distribution in advance, our Wasserstein gradient descent algorithm learns the support of the optimal reproduction distribution by moving particles. We prove its local convergence  and analyze the sample complexity of our R-D estimator based on a connection to entropic optimal transport. 
Experimentally, we obtain comparable or tighter bounds than state-of-the-art neural network methods on low-rate sources while requiring considerably less tuning and computation effort. 
We also highlight a connection to maximum-likelihood deconvolution and introduce a new class of sources that can be used as test cases with known solutions to the R-D problem.
\end{abstract}

\section{Introduction}
The rate-distortion (R-D) function $R(D)$ occupies a central place in the theory of lossy compression. For a given data source and a fidelity (or \emph{distortion}) criterion, $R(D)$ characterizes the minimum possible communication cost needed to reproduce a source sample within an error threshold of $D$, by \emph{any} compression algorithm  \citep{shannon1959coding}. 
A basic scientific and practical question is therefore establishing $R(D)$ for any given data source of interest, which helps assess the (sub)optimality of the compression algorithms and guide their development. 
The classic algorithm by \citet{blahut1972computation} and \citet{arimoto1972algorithm} assumes a known discrete source and computes its $R(D)$ by an exhaustive optimization procedure. This often has limited applicability in practice, and a line of research has sought to instead \emph{estimate} $R(D)$ from data samples \citep{harrison2008estimation, gibson2017rate}, with recent methods \citep{yang2022towards, lei2023neural} inspired by deep generative models.

In this work, we propose a new approach to R-D estimation from the perspective of optimal transport. 
Our starting point is the formulation of the R-D problem as the minimization of a certain rate functional \citep{harrison2008estimation} over the space of probability measures on the reproduction alphabet. 
Optimization over such an infinite-dimensional space has long been studied under gradient flows \citep{AmbrosioGigliSavare.08}, and we consider a concrete algorithmic implementation based on moving particles in space. 
This formulation of the R-D problem also suggests connections to entropic optimal transport and non-parametric statistics, each offering us new insight into the solution of the R-D problem under a quadratic distortion.
More specifically, our contributions are three-fold:

First, we introduce a neural-network-free $R(D)$ upper bound estimator for continuous alphabets. 
We implement the estimator by Wasserstein gradient descent (WGD) over the space of reproduction distributions. 
Experimentally, we found the method to converge much more quickly than state-of-the-art neural methods with hand-tuned architectures, while offering comparable or tighter bounds.

Second, we theoretically characterize convergence of our WGD algorithm and the sample complexity of our estimator. The latter draws on a connection between the R-D problem and that of minimizing an entropic optimal transport (EOT) cost relative to the source measure, allowing us to turn statistical bounds for EOT \citep{mena2019statistical} into finite-sample bounds %
for R-D estimation.

Finally, we introduce a new, rich class of sources with known ground truth, including Gaussian mixtures, as a benchmark for algorithms. While the literature relies on the Gaussian or Bernoulli for this purpose, we use the connection with maximum likelihood deconvolution to show that a Gaussian convolution of \emph{any} distribution can serve as a source with a known solution to the R-D problem.

\section{Lossy compression, entropic optimal transport, and MLE}

This section introduces the R-D problem and its rich connections to entropic optimal transport and statistics, along with new insights into its solution. 
Sec.~\ref{sec:setup} sets the stage for our method (Sec.~\ref{sec:method}) by a known formulation of the standard R-D problem as an optimization problem over a space of probability measures. 
Sec.~\ref{sec:ot-perspective} discusses the equivalence between the R-D problem and a projection of the source distribution under entropic optimal transport; this %
is a key to our sample complexity results in Sec.~\ref{sec:sample-complexity}. Lastly, Sec 2.3 gives a statistical interpretation of R-D as maximum-likelihood deconvolution and uses it to analytically derive a segment of the R-D curve for a new class of sources under quadratic distortion; this allows us to assess the optimality of algorithms in experiment Sec.~\ref{sec:deconv}.

\subsection{Setup}\label{sec:setup}

For a memoryless data source $X$ with distribution $P_X$, its rate-distortion (R-D) function describes the minimum possible number of bits per sample needed to reproduce the source within a prescribed distortion threshold $D$. Let the source and reproduction take values in two sets $\setX$ and $\setY$, known as the source and reproduction \emph{alphabets}, and let $\rho: (\setX, \setY) \to [0, \infty)$ be a given distortion function. 
The R-D function is defined by the following optimization problem \citep{polyanskiy2022information},
\begin{align}
R(D) = \inf_{Q_{Y|X}:  \E_{ P_X Q_{Y|X}} [\rho(X, Y) ]\leq D} I(X; Y), \label{eq:RD-definition}
\end{align}
where $Q_{Y|X}$ is any Markov kernel from $\setX$ to $\setY$ conceptually associated with a (possibly) stochastic compression algorithm, and $I(X;Y)$ is the mutual information of the joint distribution $P_X Q_{Y|X}$. 

For ease of presentation, we now switch to a more abstract notation without reference to random variables. We provide the precise definitions in the Supplementary Material. Let $\setX$ and $\setY$ be standard Borel spaces; let $\mu \in \mathcal{P}(\setX)$ be a fixed probability measure on $\setX$, which should be thought of as the source distribution $P_X$. 
For a measure $\pi$ on the product space $\setX \times \setY$, the notation $\pi_1$ (or $\pi_2$) denotes the first (or second) marginal of $\pi$. 
For any $\nu \in \mathcal{P}(\setY)$, we denote by $\Pi(\mu, \nu)$ the set of couplings between $\mu$ and $\nu$ (i.e.,  $\pi_1=\mu$ and $\pi_2=\nu$). Similarly, $\Pi(\mu, \cdot)$ denotes the set of measures $\pi$ with $\pi_1=\mu$. Throughout the paper, $K$ denotes a transition kernel (conditional distribution) from $\setX$ to $\setY$, and $\mu \otimes K$ denotes the product measure formed by $\mu$ and $K$. 
Then $R(D)$ is equivalent to 
\begin{align}
R(D) = \inf_{K:  \int \rho d (\mu \otimes K) \leq D} H(\mu \otimes K  | \mu \otimes (\mu \otimes K)_2 ) = \inf_{\pi \in \Pi(\mu, \cdot):  \int \rho d\pi \leq D} H(\pi | \pi_1 \otimes \pi_2 ), \label{eq:RD-definition-measures}
\end{align}
where $H$ denotes relative entropy, i.e., for two measures $\alpha, \beta$ defined on a common measurable space, $H(\alpha | \beta) := \int \log(\frac{d \alpha}{d \beta}) d \alpha$ when $\alpha$ is absolutely continuous w.r.t $\beta$, and infinite otherwise.

To make the problem more tractable, we follow the approach of the classic Blahut--Arimoto algorithm \citep{blahut1972computation, arimoto1972algorithm} (to be discussed in Sec.~\ref{sec:blahut-arimoto}) and work with an equivalent unconstrained Lagrangian problem as follows. 
Instead of parameterizing the R-D function via a distortion threshold $D$, we parameterize it via a Lagrange multiplier $\lambda \geq 0$. For each fixed $\lambda$ (usually selected from a predefined grid), we aim to solve the following optimization problem,
\begin{align}
    F_\lambda(\mu) := \inf_{\nu \in \mathcal{P}(\setY)} \inf_{\pi \in \Pi(\mu, \cdot)} \lambda \int \rho d \pi  +  H(\pi | \mu \otimes \nu).  \label{eq:rd-lagrangian-problem}
\end{align}
Geometrically, $F_\lambda(\mu) \in \mathbb{R}$ is the y-axis intercept of a tangent line to the $R(D)$ with slope $-\lambda$, and $R(D)$ is determined by the convex envelope of all such tangent lines \citep{gray2011entropy}. To simplify notation, we often drop the dependence on $\lambda$ (e.g., we write $F(\mu)=F_\lambda(\mu)$) whenever it is harmless.

To set the stage for our later developments, we write the unconstrained R-D problem as
\begin{align}
     F_\lambda(\mu) &= \inf_{\nu \in \mathcal{P}(\setY)} \mathcal{L}_{BA}(\mu, \nu), \label{eq:master-ba-problem} \\
     \mathcal{L}_{BA}(\mu, \nu) &:= \inf_{\pi \in \Pi(\mu, \cdot)} \lambda \int \rho d \pi  +  H(\pi | \mu \otimes \nu) = \inf_{K} \lambda \int \rho d (\mu \otimes K)  +  H(\mu \otimes K | \mu \otimes \nu),    \label{eq:def-ba-functional}
\end{align}
where we refer to the  optimization objective $ \mathcal{L}_{BA}$ as the \emph{rate function} \citep{harrison2008estimation}. We abuse the notation to write $ \mathcal{L}_{BA}(\nu) := \mathcal{L}_{BA}(\mu, \nu)$ when it is viewed as a function of $\nu$ only, and refer to it as the \emph{rate functional}. 
The rate function characterizes a generalized Asymptotic Equipartition Property, where $\mathcal{L}_{BA}(\mu, \nu)$
is the asymptotically optimal cost of lossy compression of data $X \sim \mu$ using a random codebook constructed from samples of $\nu$ \citep{dembo2002source}.
Notably, the optimization in \eqref{eq:def-ba-functional} can be solved analytically \citep[Lemma 1.3]{csiszar1974extremum}, and $\mathcal{L}_{BA}$
simplifies to %
\begin{align}
    \mathcal{L}_{BA}(\mu, \nu) =  \int_\setX  -\log \left( \int_\setY e^{-\lambda \rho(x, y)} \nu (dy) \right)\mu(dx) \label{eq:ba-mle-objective}.
\end{align}

In practice, the source $\mu$ is only accessible via independent samples, on the basis of which we propose to estimate its $R(D)$, or equivalently $F(\mu)$. 
Let $\mu^m$ denote an $m$-sample empirical measure of $\mu$, i.e., $\mu^m = \sum_{i=1}^m  \delta_{x_i}$ with $x_{1, ..., n}$ being independent samples from $\mu$, which should be thought of as the ``training data''. Following \citet{harrison2008estimation}, we consider two kinds of (plug-in) 
estimators for $F(\mu)$: (1) the non-parametric estimator $F(\mu^m)$, and (2) the parametric estimator $F^{\mathcal{H}}(\mu^m) := \inf_{\nu \in \mathcal{H}} \mathcal{L}_{BA} (\mu^m, \nu)$, where $\mathcal{H}$ is a family of probability measures on $\setY$.
\citet{harrison2008estimation} showed that under rather broad conditions, both kinds of estimators are strongly consistent, i.e., $F(\mu^m)$ converges to $F(\mu)$ (and respectively, $F^\setH(\mu^m)$ to $F^\setH(\mu)$) with probability one as $m \to \infty$. 
Our algorithm will implement the parametric estimator $F^\setH(\mu^m)$ with $\mathcal{H}$ chosen to be the set of probability measures with finite support, and we will develop finite-sample convergence results for both kinds of estimators in the continuous setting (Proposition.~\ref{prop:samplecomplex}).

\subsection{Connection to entropic optimal transport} \label{sec:ot-perspective}
The R-D problem turns out to have a close connection to entropic optimal transport (EOT) \citep{PeyreCuturi.19}, which we will exploit in Sec.~\ref{sec:sample-complexity} to obtain sample complexity results under our approach.
For $\epsilon > 0$, the entropy-regularized optimal transport problem is given by 
\begin{align}
    \mathcal{L}_{EOT}(\mu, \nu) := \inf_{\pi \in \Pi(\mu, \nu)} \int \rho d \pi + \epsilon H(\pi | \mu \otimes \nu).  \label{eq:L-eot}
\end{align} 
We now consider the problem of projecting $\mu$ onto $\mathcal{P}(\setY)$ under the cost $\mathcal{L}_{EOT}$:
\begin{align}
    \inf_{\nu \in \P(\setY)} \mathcal{L}_{EOT}(\mu, \nu)\label{eq:master-eot-problem}.
\end{align}
In the OT literature this is known as the (regularized)  Kantorovich estimator \citep{bassetti2006minimum} for $\mu$, and can also be viewed as a Wasserstein barycenter problem \citep{agueh2011barycenters}.

With the identification $\epsilon = \lambda^{-1}$, problem \eqref{eq:master-eot-problem} is in fact equivalent to the R-D problem \eqref{eq:master-ba-problem}: compared to $\L_{BA}$ \eqref{eq:def-ba-functional}, the extra constraint on the second marginal of $\pi$ in $\L_{EOT}$ \eqref{eq:L-eot} is redundant at the optimal~$\nu$. 
More precisely, Lemma \ref{lem:basicequiv} shows that (we omit the notational dependence on $\mu$ when it is fixed):
\begin{equation}
    \inf_{\nu \in \mathcal{P}(\setY)} \mathcal{L}_{EOT}(\nu) = \inf_{\nu \in \mathcal{P}(\setY)} \lambda^{-1}\mathcal{L}_{BA}(\nu) \qquad \mbox{and}\qquad  \argmin_{\nu \in \mathcal{P}(\setY)} \mathcal{L}_{EOT}(\nu) = \argmin_{\nu \in \mathcal{P}(\setY)} \mathcal{L}_{BA}(\nu). \label{eq:equivalence-rd-eot}
\end{equation}

Existence of a minimizer holds under mild conditions, for instance if $\setX=\setY=\mathbb{R}^d$ and $\rho(x,y)$ is a coercive lower semicontinuous function of $y-x$ \citep[p.\,66]{csiszar1974extremum}.

\subsection{Connection to maximum-likelihood deconvolution }\label{sec:statistical-interp}

The connection between R-D and maximum-likelihood estimation has been observed in the information theory, machine learning and compression literature \citep{harrison2008estimation,alemi2018fixing, balle2017end, theis2017lossy, yang2020improving, yang2022towards}.
Here, we bring attention to a basic equivalence between the R-D problem and maximum-likelihood deconvolution, where the connection is particularly natural under a quadratic distortion function. Also see \citep{rigollet2018entropic} for a related discussion that inspired ours and extension to a non-quadratic distortion.
We provide further insight from the view of variational learning and inference in Section~\ref{sec:app-rd-connection-to-variational}.

\begin{wrapfigure}{tr}{0.37\textwidth}
\vspace{-3mm}
  \begin{center}
  \includegraphics[width=1.0\linewidth]{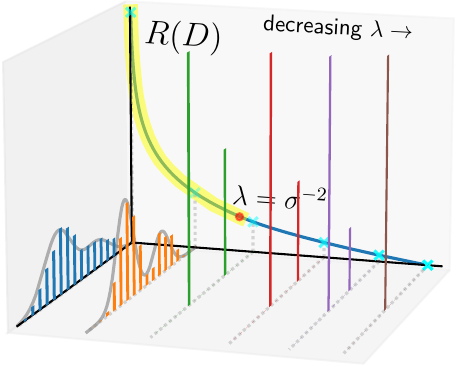}
  \vspace{-2mm}\caption{The $R(D)$ of a Gaussian mixture source, and the estimated optimal reproduction distributions $\nu^*$ (in bar plots) at varying R-D trade-offs.
  For any $\lambda \in [\sigma^{-2}, \infty)$, the corresponding $R(D)$ (yellow segment) is known analytically as is the optimal reproduction distribution $\nu^*$ (whose density is plotted in gray). For $\lambda \in (0, \sigma^{-2}]$, $\nu^*$ becomes singular and concentrated on two points, collapsing to the source mean as $\lambda \to 0$.}
  \label{fig:R-D-traversal}
\end{center}
\vspace{-8mm}
\end{wrapfigure}

Maximum-likelihood deconvolution is a classical problem of non-parametric statistics and mixture models \citep{carroll1988optimal, lindsay1993uniqueness}. The deconvolution problem is concerned with estimating an unknown distribution $\alpha$ from noise-corrupted observations $X_1, X_2, ...$, where for each $i\in \mathbb{N}$, we have $X_i = Y_i + N_i$,
$Y_i \stackrel{i.i.d.}{\sim} \alpha$, and $N_i$ are i.i.d. independent noise variables with a known distribution. 
For concreteness, suppose all variables are $\mathbb{R}^d$ valued and the noise distribution is $\mathcal{N}(0, \sigma^2 \text{I}_d)$ with Lebesgue density $\phi_{\sigma^2}$.
Denote the distribution of the observations $X_i$ by $\mu$.
Then $\mu$ has a Lebesgue density given by the convolution $\alpha * \phi_{\sigma^2} (x) := \int \phi_{\sigma^2}(x-y) \alpha (dy)$. 
Here, we consider the population-level (instead of the usual sample-based) maximum-likelihood estimator (MLE) for $\alpha$:
\begin{align}
    \nu^*= \argmax_{\nu \in \P(\mathbb{R}^d)} \int \log \left( \nu * \phi_{\sigma^2} (x) \right) \mu(dx), \label{eq:mld}
\end{align}
and observe that $\nu^*=\alpha$. Plugging in the density $\phi_{\sigma^2}(x) \propto e^{-\frac{1}{2\sigma^2}\|x\|^2}$, we see that the MLE problem \eqref{eq:mld} is equivalent to the R-D problem \eqref{eq:master-ba-problem} with $\rho(x, y) = \frac{1}{2}\|x - y\|^2, \lambda = \frac{1}{\sigma^2}$, and $\L_{BA}$ given by \eqref{eq:ba-mle-objective} in the form of a marginal log-likelihood.
Thus the R-D problem has the interpretation of estimating a  distribution from its noisy observations given through $\mu$, assuming a Gaussian noise with variance $\frac{1}{\lambda}$.

This connection suggests analytical solutions to the R-D problem for a variety of sources that arise from convolving an underlying distribution with Gaussian noise. 
Consider an R-D problem \eqref{eq:master-ba-problem} with $\setX = \setY = \mathbb{R}^d, \rho(x, y) = \frac{1}{2}\|x - y\|^2$, and let the source $\mu$ be the convolution between an arbitrary measure $\alpha \in \P(\setY)$ and Gaussian noise with known variance $\sigma^2$. E.g., using a discrete measure for $\alpha$ results in a Gaussian mixture source with equal covariance among its components.
When $\lambda =\frac{1}{\sigma^2}$, we recover exactly the population-MLE problem \eqref{eq:mld} discussed earlier, which has the solution $\nu^* =\alpha$.
While this allows us to obtain one point of $R(D)$, we can in fact extend this idea to any $\lambda \geq \frac{1}{\sigma^2}$ and obtain the analytical form for the corresponding  \emph{segment} of the R-D curve. 
Specifically, for any $\lambda \geq \frac{1}{\sigma^2}$, 
applying the summation rule for independent Gaussians reveals the source distribution $\mu$ as
\[
\mu = \alpha * \mathcal{N}(0, \sigma^2) = \alpha *  \mathcal{N}(0, \sigma^2 - \frac{1}{\lambda}) *  \mathcal{N}(0, \frac{1}{\lambda}) = \alpha_\lambda *  \mathcal{N}(0, \frac{1}{\lambda}), \quad \alpha_\lambda := \alpha *  \mathcal{N}(0, \sigma^2 - \frac{1}{\lambda}),
\]
i.e., as the convolution between another underlying distribution $\alpha_\lambda$ and independent noise with variance $\frac{1}{\lambda}$.
A solution to the R-D problem \eqref{eq:rd-lagrangian-problem} is then analogously given by $\nu^* = \alpha_\lambda$, with the corresponding optimal coupling given by $\nu^* \otimes \tilde{K}$, $\tilde{K}(y, dx) = \mathcal{N}(y, \frac{1}{\lambda})$. \footnote{
Here $\tilde{K}$ maps from the reproduction to the source alphabet, opposite to the kernel $K$ elsewhere in the text.
}
Evaluating the distortion and mutual information of the coupling then yields the $R(D)$ point associated with $\lambda$.
Fig.~\ref{fig:R-D-traversal} illustrates the $R(D)$ of a toy Gaussian mixture source, along with the $\nu^*$ estimated by our proposed WGD algorithm (Sec.~\ref{sec:method}); note that $\nu^*$ transitions from continuous (a Gaussian mixture with smaller component variances) to singular (a mixture of two Diracs) at $\lambda=\sigma^{-2}$. See caption for more details.

\section{Related Work}
\subsection{Blahut--Arimoto}\label{sec:blahut-arimoto}
The  Blahut--Arimoto (BA)  algorithm \citep{blahut1972computation, arimoto1972algorithm} is the default method for computing $R(D)$ for a known and discrete case. For a fixed $\lambda$, BA carries out the optimization problem \eqref{eq:rd-lagrangian-problem} via coordinate ascent. Starting from an initial measure $\nu^{(0)} \in \mathcal{P}(\setY)$, the BA algorithm at step $t$ computes an updated pair $(\nu^{(t+1)},  K^{(t+1)})$ as follows
\begin{align}
    \frac{ d K^{(t+1)} (x, \cdot)}{ d \nu^{(t)}} (y) &= \frac{e^{-\lambda \rho(x, y)}}{\int e^{-\lambda \rho(x, y')} \nu^{(t)}(d y')}, \quad \forall x \in \setX, \label{eq:ba-e-step} \\
    \nu^{(t+1)} &= (\mu \otimes K^{(t+1)})_2.
\end{align}
When the alphabets are finite, the above computation  can be carried out in matrix and vector operations, and the resulting sequence $\{(\nu^{(t)},  K^{(t)}) \}_{t=1}^\infty$ can be shown to converge to an optimum of~\eqref{eq:rd-lagrangian-problem}; cf.\ \citep{csiszar1974computation, csiszar1984information}. 
When the alphabets are not finite, e.g., $\setX=\setY=\mathbb{R}^d$, the BA algorithm no longer applies, as it is unclear how to digitally represent the measure $\nu$ and kernel $K$ and to tractably perform the integrals required by the algorithm. The common workaround is to perform a discretization step and then apply BA on the resulting discrete problem.

One standard discretization method is to tile up the alphabets with small bins  \citep{gray1998quantization}. 
This quickly becomes infeasible as the number of dimensions increases. 
We therefore consider discretizing the data space $\setX$ to be the support of training data distribution $\mu^m$, i.e., the discretized alphabet is the set of training samples;
this can be justified by the consistency of the parametric R-D estimator $F^\setH(\mu^m)$ \citep{harrison2008estimation}. It is less clear how to discretize the reproduction space $\setY$, especially in high dimensions.  Since we work with $\setX = \setY$, we will disretize $\setY$ similarly and use an $n$-element random subset of the training samples, as also considered by \citet{lei2023neural}. As we will show, this rather arbitrary placement of the support of $\nu$ results in poor performance, and can be significantly improved from our perspective of evolving particles.

\subsection{Neural network-based methods for estimating $R(D)$}\label{sec:existing-nn-upper-bounds}
\textbf{RD-VAE (\citep{yang2022towards}):}
To overcome the limitations of the BA algorithm,  \citet{yang2022towards} proposed to parameterize the transition kernel $K$ and reproduction distribution $\nu$ of the BA algorithm by neural density estimators \citep{papamakarios2021normalizing}, and optimize the same objective \eqref{eq:rd-lagrangian-problem} by (stochastic) gradient descent.
They estimate \eqref{eq:rd-lagrangian-problem} by Monte Carlo using joint samples $(X_i, Y_i) \sim \mu \otimes K$; in particular, the relative entropy can be written as $H(\mu \otimes K | \mu \otimes \nu) = \int \int \log \left( \frac{d K(x, \cdot)}{d \nu} (y) \right) K(x, dy) \mu(dx)$, where the integrand is computed exactly via a density ratio. 
In practice, an alternative parameteriation is often used where the neural density estimators are defined on a lower dimensional latent space than the reproduction alphabet, and the resulting approach is closely related to VAEs \citep{kingma2013auto}.
\citet{yang2022towards} additionally propose a neural estimator for a lower bound on $R(D)$, based on a dual representation due to \citet{csiszar1974extremum}.
~~ \textbf{NERD \citep{lei2023neural}:}
Instead of working with the transition kernel $K$ as in the RD-VAE, \citet{lei2023neural} considered optimizing the form of the rate functional in \eqref{eq:ba-mle-objective}, via gradient descent on the parameters of $\nu$ parameterized by a neural network. Let $\nu^{\setZ}$ be a base distribution over $\setZ = \mathbb{R}^K$, such as the standard Gaussian, and $\omega: \setZ \to \setY$ be a decoder network. The variational measure $\nu $ is then modeled as the image measure of $\nu^{\setZ}$ under $\omega$.
To evaluate and optimize the objective \eqref{eq:ba-mle-objective}, the intractable inner integral w.r.t. $\nu$ is replaced with a plug-in estimator, 
so that for a given $x \in \setX$,
\begin{align}
    - \log \left( \int_\setY e^{-\lambda \rho(x, y)} \nu (dy) \right)  \approx  - \log \left(\frac{1}{n}  \sum_{j=1}^n e^{-\lambda \rho(x, Y_j)}  \right), \quad Y_j \sim \nu,  j=1,2,..., n. \label{eq:nerd}
\end{align}
After training, we estimate an R-D upper bound using $n$ samples from $\nu$ (to be discussed in Sec.~\ref{sec:estimation-of-r-d}).

\subsection{Other related work}\label{sec:gmm-estimation-frw}

\textbf{Within information theory:} Recent work by \citet{wu2022communication} and \citet{lei2023relation} also note the connection between the R-D function and entropic optimal transport. \citet{wu2022communication} compute the R-D function in the finite and known alphabet setting by solving a version of the EOT problem \eqref{eq:master-eot-problem}, whereas \citet{lei2023relation} numerically verify the equivalence \eqref{eq:equivalence-rd-eot} on a discrete problem and discuss the connection to scalar quantization. We also experimented with estimating $R(D)$ by solving the EOT problem \eqref{eq:master-eot-problem}, but found it computationally much more efficient to work with the rate functional \eqref{eq:ba-mle-objective}, and we see the primary benefit of the EOT connection as bringing in tools from statistical OT \citep{genevay2019sample, mena2019statistical, rigollet2022sample} for R-D estimation.
\textbf{Outside of information theory:} \citet{rigollet2018entropic} note a  connection between the EOT projection problem \eqref{eq:master-eot-problem} and maximum-likelihood deconvolution \eqref{eq:mld}; our work complements their perspective by re-interpreting both problems through the equivalent R-D problem. 
Unbeknownst to us at the time, \citet{yan2023learning} proposed similar algorithms to ours in the context of Gaussian mixture estimation, which we recognize as R-D estimation under quadratic distortion (see Sec.~\ref{sec:statistical-interp}). Their work is based on gradient flow in the Fisher-Rao-Wasserstein (FRW) geometry \citep{chizat2018interpolating}, which our hybrid algorithm can be seen as  implementing.  
\citet{yan2023learning} prove that, in an idealized setting with infinite particles, FRW gradient descent does not get stuck at local minima; by contrast, our convergence and sample-complexity results (Prop.~\ref{lem:convergence}, \ref{prop:samplecomplex}) hold for any finite number of particles. We  additionally consider larger-scale problems and the stochastic optimization setting.

\section{Proposed method}\label{sec:method}
For our algorithm, we require $\setX = \setY = \mathbb{R}^d$ and $\rho$ be continuously differentiable. We now introduce the gradient descent algorithm in Wasserstein space to solve the problems \eqref{eq:master-ba-problem} and \eqref{eq:master-eot-problem}. We defer all proofs to the Supplementary Material. 
To minimize a functional $\L: \setP(\setY) \to \mathbb{R}$ over the space of probability measures, our algorithm essentially simulates the gradient flow \citep{AmbrosioGigliSavare.08} of $\L$ and follows the trajectory of steepest descent in the Wasserstein geometry. %
In practice, we represent a measure $\nu^{(t)} \in \P(\setY)$ by a collection of particles and at each time step update $\nu^{(t)}$ in a direction of steepest descent of $\L$ as given by its (negative) \emph{Wasserstein gradient}.
Denote by $\mathcal{P}_n(\mathbb{R}^d)$ the set of probability measures on $\mathbb{R}^d$ that are supported on at most $n$ points. 
Our algorithm implements the parametric R-D estimator with the choice  $\setH = \mathcal{P}_n(\mathbb{R}^d)$ (see discussions at the end of Sec.~\ref{sec:setup}).

\subsection{Wasserstein gradient descent (WGD) } \label{sec:wgd}

Abstractly, Wasserstein gradient descent updates the variational measure $\nu$ to its pushforward $\tilde{\nu}$ under the map $(\text{id} - \gamma \Psi)$, for a function $\Psi : \mathbb{R}^d \rightarrow \mathbb{R}^d$ called the Wasserstein gradient of $\mathcal{L}$ at $\nu$ (see below) and a step size $\gamma$. To implement this scheme, we represent $\nu$ as a convex combination of Dirac measures, $\nu = \sum_{i=1}^n w_i \delta_{x_i}$ with locations $\{x_i\}_{i=1}^n\subset \mathbb{R}^d$ and weights $\{w_i\}_{i=1}^n$.  The algorithm moves each particle $x_{i}$ in the direction of $-\Psi(x_i)$, more precisely, $\tilde{\nu} = \sum_{i=1}^n w_i \delta_{x_i - \gamma \Psi(x_i)}$.

\begin{algorithm}[!h]
	\caption{Wasserstein gradient descent} 
	\begin{algorithmic}
		\STATE \textbf{Inputs:} Loss function $\mathcal{L}\in \{\mathcal{L}_{BA},  \mathcal{L}_{EOT}\}$; data distribution $\mu \in \mathcal{P}(\mathbb{R}^d)$; the number of particles $n \in \mathbb{N}$; total number of iterations $N\in \mathbb{N}$; step sizes $\gamma_1, \dots, \gamma_N$; batch size $m \in \mathbb{N}$.\\
		\FOR{$t = 1,\dots, N$} \vspace{0.05cm}
        \STATE Pick an initial measure $\nu^{(0)} \in \mathcal{P}_n(\mathbb{R}^d)$, e.g., setting the particles to $n$ random samples from $\mu$. 
        \IF{support of $\mu$ contains more than $m$ points}
        \STATE $\mu^m \leftarrow \frac{1}{m}\sum_{i=1}^m \delta_{x_i}$ for $x_1, \dots, x_m$ independent samples from $\mu$
        \STATE $\Psi^{(t)} \leftarrow $ Wasserstein gradient of $\mathcal{L}(\mu^m, \cdot)$ at $\nu^{(t-1)}$ \COMMENT{see Definition \ref{def:Wgradient}}
        \ELSE
        \STATE $\Psi^{(t)} \leftarrow $ Wasserstein gradient of $\mathcal{L}(\mu, \cdot)$ at $\nu^{(t-1)}$ \COMMENT{see Definition \ref{def:Wgradient}}
        \ENDIF

        \STATE $\nu^{(t)} \leftarrow \left(\text{id} - \gamma_t \Psi^{(t)} \right)_{\#} \nu^{(t-1)}$ \COMMENT{``$\#$'' denotes pushforward}
		\ENDFOR
		\STATE \textbf{Return:} $\nu^{(N)}$
	\end{algorithmic}
	\label{alg:P}	
\end{algorithm}

Since the optimization objectives \eqref{eq:master-ba-problem} and \eqref{eq:master-eot-problem} appear as integrals w.r.t. the data distribution $\mu$, we can also apply stochastic optimization and perform stochastic gradient descent on mini-batches with size~$m$. This allows us to handle a very large or infinite amount of data samples, or when the source is continuous. We formalize the procedure in Algorithm \ref{alg:P}.

The following gives a constructive definition of a Wasserstein gradient which forms the computational basis of our algorithm. In the literature, the Wasserstein gradient is instead usually defined as a Fr\'{e}chet differential (cf.~\cite[Definition 10.1.1]{AmbrosioGigliSavare.08}), but we emphasize that in smooth settings, the given definition recovers the one from the literature (cf.~\cite[Lemma A.2]{chizat2022mean}).

\begin{definition}\label{def:Wgradient}
For a functional $\mathcal{L} : \mathcal{P}(\setY) \rightarrow \mathbb{R}$ and $\nu \in \mathcal{P}(\setY)$, we say that $V_{\mathcal{L}}(\nu) : \mathbb{R}^d \rightarrow \mathbb{R}$ is a first variation of $\mathcal{L}$ at $\nu$ if 
\[
\lim_{\varepsilon \rightarrow 0} \frac{\mathcal{L}((1-\varepsilon)\nu + \varepsilon \tilde\nu) - \mathcal{L}(\nu)}{\varepsilon} = \int V_{\mathcal{L}}(\nu) \,d(\tilde\nu - \nu) ~~\text{ for all } \tilde\nu \in \mathcal{P}(\setY).
\]
We call its (Euclidean) gradient $\nabla V_{\mathcal{L}}(\nu) : \mathbb{R}^d \rightarrow \mathbb{R}^d$, if it exists, the Wasserstein gradient of $\mathcal{L}$ at $\nu$.
\end{definition}

For $\mathcal{L}=\mathcal{L}_{EOT}$, the first variation is given by the Kantorovich potential, which is the solution of the convex dual of $\mathcal{L}_{EOT}$ and commonly computed by Sinkhorn's algorithm \citep{PeyreCuturi.19, nutz2021introduction}. Specifically, let $(\varphi^\nu, \psi^\nu)$ be potentials for $\mathcal{L}_{EOT}(\mu, \nu)$. Then $V_{\L}(\nu) = \psi^\nu$ is the first variation w.r.t.~$\nu$ (cf.~\cite[equation (20)]{carlier2022lipschitz}), and hence $\nabla \psi^\nu$ is the Wasserstein gradient. This gradient exists whenever $\rho$ is differentiable and the marginals are sufficiently light-tailed; we give details in Sec.~\ref{sec:app-deriving-wg} of the Supplementary Material.
For $\L=\L_{BA}$, the first variation can be computed explicitly. As derived in Sec.~\ref{sec:app-deriving-wg} of the Supplementary Material, the first variation at $\nu$ is 
\[
\psi^\nu(y) = \int -\frac{\exp(-\lambda\rho(x, y))}{\int \exp(-\lambda\rho(x, \tilde{y})) \nu(d\tilde{y})} \mu(dx)
\]
and then the Wasserstein gradient is $\nabla \mathcal{L}_{BA}(\nu)=\nabla \psi^\nu$. 
We observe that $\psi^\nu(y)$ is computationally cheap; it corresponds to running a single iteration of Sinkhorn's algorithm. By contract, finding the potential for $\mathcal{L}_{EOT}$ requires running Sinkhorn's algorithm to convergence.

Like the usual Euclidean gradient, the Wasserstein gradient can be shown to possess a linearization property, whereby the loss functional is reduced by taking a small enough step along its Wasserstein gradient.
Following \citep{carlier2022lipschitz}, we state it as follows: for any $\tilde\nu \in \mathcal{P}(\setY)$ and $\pi \in \Pi(\nu, \tilde\nu)$,
\begin{equation}\label{eq:eotgradientprop}
\begin{split}
\mathcal{L}(\tilde\nu) - \mathcal{L}(\nu) = \int (y-x)^{\top} \nabla V_{\mathcal{L}}(\nu)(x) \,\pi(dx, dy) + o\left(\int \|y-x\|^2 \,\pi(dx, dy)\right), \\
\left| \int \|\nabla V_{\mathcal{L}}(\nu)\|^2 \,d\nu - \int \|\nabla V_{\mathcal{L}}(\tilde\nu)\|^2 \,d\tilde\nu \right| \leq C W_2(\nu, \tilde\nu).
\end{split}
\end{equation}
The first line of \eqref{eq:eotgradientprop} is proved for $\mathcal{L}_{EOT}$ in \citep[Proposition 4.2]{carlier2022lipschitz} in the case that the marginals are compactly supported and $\rho$ is twice continuously differentiable. In this setting, the second line of \eqref{eq:eotgradientprop} follows using $a^2-b^2 = (a+b)(a-b)$ and a combination of boundedness and Lipschitz continuity of $\nabla V_{\mathcal{L}}$, see \citep[Proposition~2.2 and Corollary~2.4]{carlier2022lipschitz}.

The linearization property given by \eqref{eq:eotgradientprop} enables us to show that Wasserstein gradient descent for $\mathcal{L}_{EOT}$ and $\mathcal{L}_{BA}$ converges to a stationary point under mild conditions:

\begin{proposition}[Convergence of Wasserstein gradient descent]\label{lem:convergence}
    Let $\gamma_1 \geq \gamma_2 \geq \dots \geq 0$ satisfy $\sum_{k=1}^\infty \gamma_k = \infty$ and $\sum_{k=1}^{\infty} \gamma_k^2 < \infty$.
    Let $\mathcal{L} : \mathcal{P}(\mathbb{R}^d) \rightarrow \mathbb{R}$ be Wasserstein differentiable in the sense that \eqref{eq:eotgradientprop} holds. Denoting by $\nu^{(t)}$ the steps in Algorithm \ref{alg:P}, and suppose that $\mathcal{L}(\nu^{(0)})$ is finite and $\int \| \nabla V_{\mathcal{L}}(\nu^{(t)})\|^2 \,d\nu^{(t)}$ is bounded.
    Then
    \[
    \lim_{t \rightarrow \infty} \int \|\nabla V_{\mathcal{L}}(\nu^{(t)})\|^2 \,d\nu^{(t)} = 0.
    \]
\end{proposition}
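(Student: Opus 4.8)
The plan is to mimic the classical nonconvex stochastic-approximation (SGD) convergence argument, with the Wasserstein linearization inequality \eqref{eq:eotgradientprop} substituting for the Euclidean descent lemma and $W_2$ for Euclidean distance. Write $g_t := \int \|\nabla V_{\mathcal{L}}(\nu^{(t)})\|^2\,d\nu^{(t)}$, which by hypothesis satisfies $g_t \le B$ for some constant $B$; note also that $\mathcal{L}$ is bounded below (indeed $\mathcal{L}_{BA},\mathcal{L}_{EOT} \ge 0$ because $\rho \ge 0$ and $H \ge 0$), and that each $\mathcal{L}(\nu^{(t)})$ is finite by induction from $\mathcal{L}(\nu^{(0)}) < \infty$ together with $g_{t}\le B$. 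The first thing I would do is compute the one-step decrease of $\mathcal{L}$. The update is $\nu^{(t)} = (\mathrm{id} - \gamma_t \Psi^{(t)})_{\#}\nu^{(t-1)}$ with $\Psi^{(t)} = \nabla V_{\mathcal{L}}(\nu^{(t-1)})$, so the deterministic coupling $\pi^{(t)} := (\mathrm{id},\,\mathrm{id} - \gamma_t \Psi^{(t)})_{\#}\nu^{(t-1)} \in \Pi(\nu^{(t-1)},\nu^{(t)})$ has $y - x = -\gamma_t \nabla V_{\mathcal{L}}(\nu^{(t-1)})(x)$ along it; hence $\int (y-x)^{\top}\nabla V_{\mathcal{L}}(\nu^{(t-1)})(x)\,d\pi^{(t)} = -\gamma_t g_{t-1}$ and $\int \|y-x\|^2\,d\pi^{(t)} = \gamma_t^2 g_{t-1}$. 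Plugging this into the first line of \eqref{eq:eotgradientprop} gives
\[
\mathcal{L}(\nu^{(t)}) - \mathcal{L}(\nu^{(t-1)}) = -\gamma_t\, g_{t-1} + o\!\left(\gamma_t^2\, g_{t-1}\right),
\]
and since $g_{t-1}\le B$ while $\gamma_t \to 0$ (from $\sum \gamma_k^2<\infty$), the remainder is negligible relative to the main term for large $t$, yielding $\mathcal{L}(\nu^{(t)}) - \mathcal{L}(\nu^{(t-1)}) \le -\tfrac12\gamma_t g_{t-1}$ for all $t \ge T_0$.

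Next I would sum this inequality. Telescoping from $T_0$ onward and using the lower bound on $\mathcal{L}$ gives $\tfrac12\sum_{t\ge T_0}\gamma_t g_{t-1} \le \mathcal{L}(\nu^{(T_0)}) - \inf\mathcal{L} < \infty$, hence $\sum_t \gamma_t g_{t-1} < \infty$. Combined with $\sum_t \gamma_t = \infty$, this immediately forces $\liminf_{t\to\infty} g_t = 0$, since otherwise $g_t$ is bounded below by a positive constant for all large $t$ and the series diverges. To upgrade this to $\lim_{t\to\infty} g_t = 0$, I would invoke the second line of \eqref{eq:eotgradientprop}: since $W_2(\nu^{(t-1)},\nu^{(t)}) \le \big(\int\|y-x\|^2\,d\pi^{(t)}\big)^{1/2} = \gamma_t\sqrt{g_{t-1}} \le \sqrt{B}\,\gamma_t$, we get $|g_t - g_{t-1}| \le C\sqrt{B}\,\gamma_t$, so consecutive values of $g_t$ cannot jump. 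A standard oscillation lemma then finishes the argument: if $\limsup_t g_t > 0$, the sequence must make infinitely many disjoint "upcrossings'' from a level $\le \eta$ back up to a level $\ge 2\eta$; on each such excursion the increment bound forces the corresponding partial sum of the $\gamma_t$ to exceed a fixed positive amount, while $g_{t-1}$ stays above $\eta$ there, so the contributions to $\sum_t \gamma_t g_{t-1}$ add up to $+\infty$, contradicting the summability just established.

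The step I expect to be the main obstacle is making the remainder in the first line of \eqref{eq:eotgradientprop} harmless \emph{uniformly along the trajectory}: the $o(\cdot)$ there is only asymptotic as the transported mass vanishes, and its implicit constant could a priori depend on the (moving, possibly unbounded) base measure $\nu^{(t-1)}$. What rescues this is the stronger form of the linearization available in the compactly supported, $\rho \in C^2$ regime of \citep{carlier2022lipschitz}, where $\nabla V_{\mathcal{L}}$ is uniformly bounded and Lipschitz; this replaces the remainder by $\tfrac{L}{2}\int\|y-x\|^2\,d\pi^{(t)} = \tfrac{L}{2}\gamma_t^2 g_{t-1}$ with a single constant $L$, so the one-step descent inequality holds for every $t$ with $\gamma_t \le 1/L$. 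Everything else — nonnegativity of $\mathcal{L}_{BA}$ and $\mathcal{L}_{EOT}$, the choice of the identity-type coupling, and the $\liminf$-to-$\lim$ passage via the oscillation lemma — is routine bookkeeping.
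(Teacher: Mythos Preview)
Your proposal is correct and follows essentially the same route as the paper: both use the linearization \eqref{eq:eotgradientprop} on the deterministic coupling to obtain a one-step descent, telescope to get $\sum_t \gamma_t g_t < \infty$, then use the second line of \eqref{eq:eotgradientprop} together with $W_2(\nu^{(t)},\nu^{(t+1)}) \le \sqrt{B}\,\gamma_t$ to control $|g_t - g_{t+1}|$ and conclude via an oscillation/upcrossing argument (which the paper packages as an auxiliary sequence lemma). Your explicit discussion of the uniformity of the $o(\cdot)$ remainder along the trajectory is a point the paper's proof leaves implicit, but otherwise the two arguments coincide.
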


\subsection{Hybrid algorithm}\label{sec:hybrid-algo}
A main limitation of the BA algorithm is that the support of $\nu^{(t)}$ is restricted to that of the (possibly bad) initialization $\nu^{(0)}$. On the other hand, Wasserstein gradient descent (Algorithm~\ref{alg:P}) only evolves the particle locations of $\nu^{(t)}$, but not the weights, which are fixed to be uniform by default. We therefore consider a hybrid algorithm where we alternate between WGD and the BA update steps, allowing us to optimize the particle weights as well. 
Experimentally, this translates to faster convergence than the base WGD algorithm (Sec.~\ref{sec:deconv}). 
Note however, unlike WGD, the hybrid algorithm does not directly lend itself to the stochastic optimization setting, as BA updates on mini-batches no longer guarantee monotonic improvement in the objective and can lead to divergence. 
We treat the convergence of the hybrid algorithm in the Supplementary Material Sec.~\ref{sec:app-hybrid-algo-convergence}.

\subsection{Sample complexity}\label{sec:sample-complexity}
Let $\setX = \setY = \mathbb{R}^d$ and $\rho(x, y) = \|x-y\|^2$. 
Leveraging work on the statistical complexity of EOT \citep{mena2019statistical},  we obtain finite-sample bounds for the theoretical estimators implemented by WGD in terms of the number of particles and source samples. The bounds hold for both the  R-D problem \eqref{eq:master-ba-problem} and EOT projection problem \eqref{eq:master-eot-problem} as they share the same optimizers (see Sec.~\ref{sec:ot-perspective}), and strengthen existing asymptotic results for empirical R-D estimators \citep{harrison2008estimation}. 
We note that a recent result by \citet{rigollet2022sample} might be useful for deriving alternative bounds under distortion functions other than the quadratic.

\begin{proposition}\label{prop:samplecomplex}
     Let $\mu$ be $\sigma^2$-subgaussian. Then every optimizer $\nu^*$ of \eqref{eq:master-ba-problem} and \eqref{eq:master-eot-problem} is also $\sigma^2$-subgaussian. Consider  $\mathcal{L}:=\mathcal{L}_{EOT}$. For a constant $C_d$ only depending on $d$, we have
    \begin{align*}
        \left|\min_{\nu \in \mathcal{P}(\mathbb{R}^d)} \mathcal{L}(\mu, \nu) - \min_{\nu_n \in \mathcal{P}_n(\mathbb{R}^d)} \mathcal{L}(\mu, \nu_n)\right| &\leq C_d \, \epsilon \, \left(1 + \frac{\sigma^{\lceil 5d/2\rceil + 6}}{\epsilon^{\lceil 5d/4\rceil+3}}\right) \, \frac{1}{\sqrt{n}}, \\
        \mathbb{E}\left[ \left| \min_{\nu \in \mathcal{P}(\mathbb{R}^d)} \mathcal{L}(\mu, \nu) -  \min_{\nu \in \mathcal{P}(\mathbb{R}^d)} \mathcal{L}(\mu^m, \nu) \right|\right] &\leq C_d \, \epsilon \, \left(1 + \frac{\sigma^{\lceil 5d/2\rceil + 6}}{\epsilon^{\lceil 5d/4\rceil+3}}\right) \, \frac{1}{\sqrt{m}},\\
        \mathbb{E}\left[ \left| \min_{\nu \in \mathcal{P}(\mathbb{R}^d)} \mathcal{L}(\mu, \nu) -  \min_{\nu_n \in \mathcal{P}_n(\mathbb{R}^d)} \mathcal{L}(\mu^m, \nu_n) \right|\right] &\leq C_d \, \epsilon \, \left(1 + \frac{\sigma^{\lceil 5d/2\rceil + 6}}{\epsilon^{\lceil 5d/4\rceil+3}}\right) \, \left(\frac{1}{\sqrt{m}}+ \frac{1}{\sqrt{n}}\right),
    \end{align*}
    for all $n, m \in \mathbb{N}$, 
    where $\mathcal{P}_n(\mathbb{R}^d)$  is the set of probability measures over $\mathbb{R}^d$ supported on at most $n$ points,
    $\mu^m$ is the empirical measure of $\mu$ with $m$ independent samples and the expectation $\mathbb{E}[\cdot]$ is over these samples. The same inequalities hold for $\mathcal{L}:=\lambda^{-1}\mathcal{L}_{BA}$, with the identification $\epsilon = \lambda^{-1}$.
\end{proposition}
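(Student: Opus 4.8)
The plan is to transport the statistical-complexity estimates for entropic optimal transport of \citet{mena2019statistical} through the equivalence of Section~\ref{sec:ot-perspective}. Write $S_\epsilon(\alpha,\beta):=\mathcal{L}_{EOT}(\alpha,\beta)$ for $\rho(x,y)=\|x-y\|^2$ and $\epsilon=\lambda^{-1}$, and set $B:=C_d\,\epsilon\,(1+\sigma^{\lceil 5d/2\rceil+6}/\epsilon^{\lceil 5d/4\rceil+3})$. The input estimate is: if $\alpha,\beta$ are $\sigma^2$-subgaussian and $\hat\alpha_k,\hat\beta_l$ are $k$- and $l$-sample empirical measures, then $\mathbb{E}\bigl|S_\epsilon(\hat\alpha_k,\hat\beta_l)-S_\epsilon(\alpha,\beta)\bigr|\le B\,(k^{-1/2}+l^{-1/2})$, together with its one-sided specializations (one marginal kept fixed).

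First I would establish the subgaussianity claim. By \eqref{eq:equivalence-rd-eot} the minimizers of $\mathcal{L}_{EOT}(\mu,\cdot)$ and $\mathcal{L}_{BA}(\mu,\cdot)$ coincide, so it suffices to work with the deconvolution form \eqref{eq:ba-mle-objective}. Writing the first-order stationarity of a minimizer $\nu^*$ — the first variation $\psi^{\nu^*}$ is constant ($\nu^*$-a.e.) — and using the Gaussian shape of $e^{-\lambda\|x-y\|^2}$, a Fourier argument shows that when $\mathrm{supp}\,\nu^*=\mathbb{R}^d$ one has $\mu=\nu^*\ast\mathcal{N}(0,\tfrac1{2\lambda}\mathrm{I})$, i.e.\ $\nu^*$ is the Gaussian deconvolution of $\mu$ from Section~\ref{sec:statistical-interp}; the degenerate cases (a minimizer on a proper affine subspace, or nonexistence) are only ``more concentrated'' and can be treated by a limiting argument. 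Given the identity, for $Y\sim\nu^*$, $N\sim\mathcal{N}(0,\tfrac1{2\lambda}\mathrm{I})$ independent and $X:=Y+N\sim\mu$, one gets $\mathbb{E}\bigl[e^{\langle t,\,Y-\mathbb{E}Y\rangle}\bigr]=\mathbb{E}\bigl[e^{\langle t,\,X-\mathbb{E}X\rangle}\bigr]\,e^{-\|t\|^2/(4\lambda)}\le e^{\sigma^2\|t\|^2/2}$, so $\nu^*$ is $\sigma^2$-subgaussian.

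Now the three inequalities, with $V(\alpha):=\min_{\nu}S_\epsilon(\alpha,\nu)$. \emph{Particle bound:} $\mathcal{P}_n(\mathbb{R}^d)\subset\mathcal{P}(\mathbb{R}^d)$ gives $\min_{\nu_n}S_\epsilon(\mu,\nu_n)\ge V(\mu)$, while if $\nu^*$ is optimal and $\hat\nu^{*}_n$ is its $n$-sample empirical measure (an element of $\mathcal{P}_n$) then $\min_{\nu_n}S_\epsilon(\mu,\nu_n)\le\mathbb{E}[S_\epsilon(\mu,\hat\nu^{*}_n)]$, so the gap is at most $\mathbb{E}|S_\epsilon(\mu,\hat\nu^{*}_n)-S_\epsilon(\mu,\nu^*)|\le B\,n^{-1/2}$ (one-sided estimate; $\mu,\nu^*$ are $\sigma^2$-subgaussian). \emph{Sample bound:} plugging $\nu^*$ into the empirical problem, $V(\mu^m)\le S_\epsilon(\mu^m,\nu^*)$, hence $\mathbb{E}[(V(\mu^m)-V(\mu))_+]\le\mathbb{E}|S_\epsilon(\mu^m,\nu^*)-S_\epsilon(\mu,\nu^*)|\le B\,m^{-1/2}$; for the reverse, let $\hat\nu^m$ minimize $\mathcal{L}_{BA}(\mu^m,\cdot)$ (the finitely supported NPMLE, which exists). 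Since $\hat\nu^m$ minimizes $\mathcal{L}_{BA}(\mu^m,\cdot)$, \eqref{eq:equivalence-rd-eot} gives $V(\mu^m)=S_\epsilon(\mu^m,\hat\nu^m)$, whereas $V(\mu)\le\lambda^{-1}\mathcal{L}_{BA}(\mu,\hat\nu^m)\le S_\epsilon(\mu,\hat\nu^m)$; subtracting, $V(\mu)-V(\mu^m)\le S_\epsilon(\mu,\hat\nu^m)-S_\epsilon(\mu^m,\hat\nu^m)$, so taking expectations and applying the one-sided estimate with $\hat\nu^m$ as the fixed second marginal yields $\mathbb{E}[(V(\mu)-V(\mu^m))_+]\le B\,m^{-1/2}$; adding the two halves gives $\mathbb{E}|V(\mu)-V(\mu^m)|\le 2B\,m^{-1/2}$. \emph{Combined bound:} $\min_{\nu_n}S_\epsilon(\mu^m,\nu_n)\le S_\epsilon(\mu^m,\hat\nu^{*}_n)$ (two empirical marginals of $\sigma^2$-subgaussian measures, rate $m^{-1/2}+n^{-1/2}$), while $\min_{\nu_n}S_\epsilon(\mu^m,\nu_n)\ge V(\mu^m)$ reduces the other direction to the sample bound. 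Finally, the statements for $\mathcal{L}=\lambda^{-1}\mathcal{L}_{BA}$ follow because $\lambda^{-1}\mathcal{L}_{BA}\le\mathcal{L}_{EOT}$ pointwise with equality at the optimal $\nu$ (eq.~\eqref{eq:equivalence-rd-eot}), so all the sandwich inequalities carry over verbatim.

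The main obstacle is the reverse direction of the sample bound: to feed $\hat\nu^m$ into the Mena--Niles-Weed estimate one needs it to be $\sigma^2$-subgaussian up to an absolute constant, \emph{with no $\log m$ inflation}. I would prove this by showing (i) the NPMLE $\hat\nu^m$ is supported on the convex hull of $x_1,\dots,x_m$ — projecting any atom of a candidate $\nu$ onto that hull does not decrease any factor $h_\nu(x_i)=\int e^{-\lambda\|x_i-y\|^2}\nu(dy)$ — and (ii) the mass $\hat\nu^m$ assigns far from the bulk is comparable to the (small, $O(1/m)$-scale) mass $\mu^m$ assigns near its outliers, so that in expectation $\hat\nu^m$ inherits genuinely $O(\sigma^2)$-subgaussian tails from $\mu$, the tiny atom weights near the $O(\sigma\sqrt{\log m})$-scale outliers being exactly what cancels the logarithm. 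An alternative is to re-run the Mena--Niles-Weed argument, bounding the complexity of the EOT dual potentials uniformly over second marginals in the class of $O(\sigma^2)$-subgaussian measures. The remaining points are routine: existence of $\nu^*$ and of the NPMLE $\hat\nu^m$ (near-optimizers, resp.\ Lindsay's theorem), and finiteness of $S_\epsilon$ at the empirical measures.
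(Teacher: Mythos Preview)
Your sandwich arguments for the three inequalities are essentially the ones the paper uses, and your alternative at the end --- re-running the Mena--Niles-Weed argument via their Proposition~2, with a uniform bound over the relevant second marginals --- is in fact what the paper does. The difference, and the gap in your proposal, lies in how one controls the subgaussianity of $\nu^*$ and of the empirical optimizer $\hat\nu^m$.

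Your Fourier/deconvolution argument for the subgaussianity of $\nu^*$ is not correct. Stationarity of $\psi^{\nu^*}$ on $\mathrm{supp}\,\nu^*$ says that $\int e^{-\lambda\|x-y\|^2}\,\frac{\mu(dx)}{h_{\nu^*}(x)}$ is constant in $y$ on that support; this is a statement about the reweighted measure $\mu/h_{\nu^*}$, not about $\mu$ itself, and does not yield $\mu=\nu^*\ast\mathcal{N}(0,\tfrac{1}{2\lambda}\mathrm{I})$. Indeed, for a generic subgaussian $\mu$ the optimizer $\nu^*$ has strictly smaller support than $\mathbb{R}^d$ (often finite), so the ``degenerate case'' is the typical one, and the limiting argument you allude to would have to carry the whole proof. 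Likewise, your items (i)--(ii) for $\hat\nu^m$ are on the right track (projection onto the convex hull of the data does improve $\mathcal{L}_{BA}$), but convex-hull containment alone only gives a subgaussian constant of order $\sigma^2\log m$, and the heuristic in (ii) that the outlier atoms have weight $O(1/m)$ is not justified.

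The paper replaces all of this with one clean lemma: for the quadratic cost, any candidate $\nu$ can be improved to some $\tilde\nu\leq_c\mu$ (dominated in convex order) with $\mathcal{L}_{EOT}(\mu,\tilde\nu)\le\mathcal{L}_{EOT}(\mu,\nu)$. The proof takes an optimal coupling $\pi\in\Pi(\nu,\mu)$, disintegrates over $\setY$, and pushes $\nu$ forward under the barycenter map $T(y)=\int x\,K(y,dx)$; the barycenter minimizes squared distance so the transport cost drops, and data processing handles the entropy. Convex order then gives $\mathbb{E}[e^{\langle t,Y\rangle}]\le\mathbb{E}[e^{\langle t,X\rangle}]$ directly, so $\nu^*$ is $\sigma^2$-subgaussian and $\hat\nu^m\leq_c\mu^m$ has its subgaussian constant bounded by that of $\mu^m$. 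Feeding this into \cite[Proposition~2]{mena2019statistical} and then averaging the subgaussian constant of $\mu^m$ (as in the proof of their Theorem~2) gives the second inequality with no $\log m$ loss. This single lemma is the missing idea; with it, both your ``main obstacle'' and the subgaussianity claim become immediate.
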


\subsection{Estimation of rate and distortion}\label{sec:estimation-of-r-d}
Here, we describe our estimator for an upper bound $(\mathcal{D}, \mathcal{R})$ of $R(D)$ after solving the unconstrained problem \eqref{eq:rd-lagrangian-problem}. We provide more details in Sec.~\ref{sec:app-numerical-rd-est} of the Supplementary Material.

For any given pair of $\nu$ and $K$, we always have that $\mathcal{D} := \int \rho d (\mu \otimes K )$ and $ \mathcal{R} :=H(\mu \otimes K | \mu \otimes \nu) $ lie on an upper bound of $R(D)$ \citep{berger1971rate}.
The two quantities can be estimated by standard Monte Carlo provided we can sample from $\mu \otimes K$ and evaluate the density $\frac{d \mu \otimes K}{d \mu \otimes \nu} (x, y) = \frac{d K(x, \cdot)}{d \nu} (y)$.

When only $\nu$ is given, e.g., obtained from optimizing \eqref{eq:ba-mle-objective} with WGD or NERD, we estimate an R-D upper bound as follows. As in the BA algorithm, we construct a kernel $K_\nu$ similarly to \eqref{eq:ba-e-step}, i.e., $\frac{d K_\nu(x, \cdot)}{d \nu} (y) = \frac{e^{-\lambda \rho(x, y)}}{\int e^{-\lambda \rho(x, \tilde{y})} \nu(d \tilde{y})}$; then we estimate $(\mathcal{D}, \mathcal{R})$ using the pair $(\nu, K_\nu)$ as described earlier.

As NERD uses a continuous $\nu$, we follow \citep{lei2023neural} and approximate it with its $n$-sample empirical measure to estimate $(\mathcal{D}, \mathcal{R})$. 
A limitation of NERD, BA, and our method is that they tend to converge to a rate estimate of at most $\log(n)$, where $n$ is the support size of $\nu$. This is because as the algorithms approach an $n$-point minimizer $\nu_n^*$ of the R-D problem, the rate estimate $\mathcal{R}$ approaches the mutual information of $\mu \otimes K_{\nu_n^*}$, which is upper-bounded by $\log(n)$ \citep{eckstein2022convergence}. 
In practice, this means if a target  point of $R(D)$ has rate $r$, then we need $ n \geq e^r$ to estimate it accurately. 

\subsection{Computational considerations}\label{sec:computation}

Common to all the aforementioned methods is the evaluation of a pairwise distortion matrix between $m$ points in $\setX$ and $n$ points in $\setY$, which usually has a cost of $\mathcal{O}(mnd)$ for a $d$-dimensional source.
While RD-VAE uses $n=1$ (in the reparameterization trick),  the other methods (BA, WGD, NERD) typically use a much larger $n$ and thus has the distortion computation as their main computation bottleneck. 
Compared to BA and WGD, the neural methods (RD-VAE, NERD) incur additional computation from neural network operations, which can be significant for large networks.

For NERD and WGD (and BA), the rate estimate upper bound of $\log(n)$ nats/sample (see Sec.~\ref{sec:estimation-of-r-d}) can present computational challenges. To target a high-rate setting, a large number of $\nu$ particles (high $n$) is required, and care needs to be taken to avoid running out of memory during the distortion matrix computation (one possibility is to use a small batch size $m$ with stochastic optimization).

\section{Experiments}
We compare the empirical performance of our proposed method (WGD) and its hybrid variant with Blahut--Arimoto (BA) \citep{blahut1972computation, arimoto1972algorithm}, RD-VAE \citep{yang2022towards}, and NERD \citep{lei2022neural} on the tasks of maximum-likelihood deconvolution and estimation of R-D upper bounds. 
While we experimented with WGD for both $\mathcal{L}_{BA}$ and $\mathcal{L}_{EOT}$, we found the former to be 10 to 100 times faster computationally while giving similar or better results; we therefore focus on WGD for $\mathcal{L}_{BA}$ in discussions below.
For the neural-network baselines, we use the same (or as similar as possible) network architectures as in the original work \citep{yang2022towards, lei2023neural}. We use the Adam optimizer for all gradient-based methods, except we use simple gradient descent with a decaying step size in Sec.~\ref{sec:deconv} to better compare the convergence speed of WGD and its hybrid variant.
Further experiment details and results are given in the Supplementary Material Sec.~\ref{sec:app-further-experiments}.

\subsection{Deconvolution}\label{sec:deconv} 
To better understand the behavior of the various algorithms, we apply them to a deconvolution problem with known ground truth (see Sec.~\ref{sec:statistical-interp}).
We adopt the Gaussian noise as before, letting $\alpha$ be the uniform measure on the unit circle in $\mathbb{R}^2$ and the source $\mu = \alpha * \mathcal{N}(0, \sigma^2)$ with  $\sigma^2=0.1$.

\begin{figure}
\centering
\begin{minipage}{.53\textwidth}
  \centering
  \vspace{-2mm}
  \includegraphics[width=0.96\linewidth]{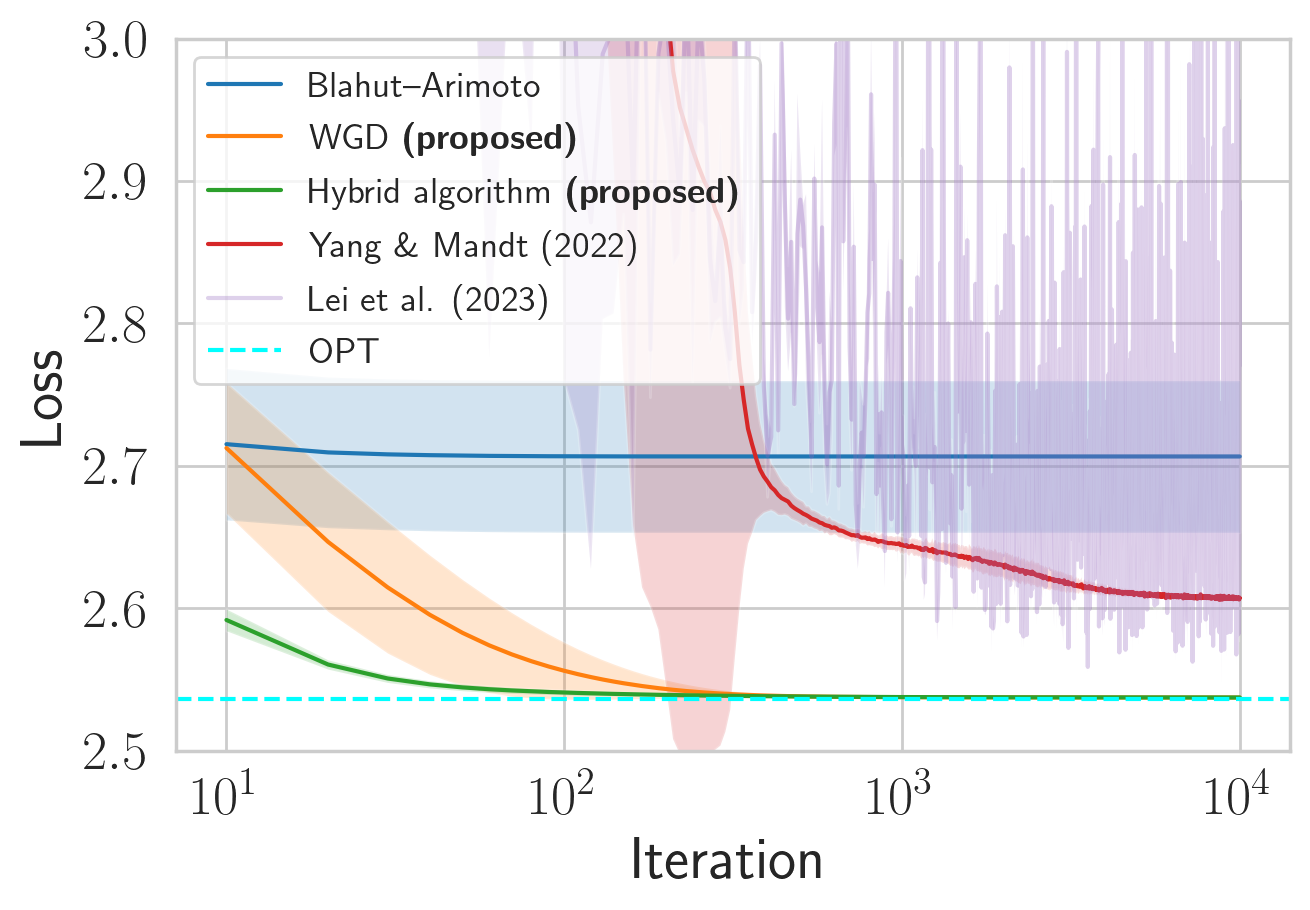}
  \vspace{-3mm}\caption{Losses over iterations. Shading corresponds to one standard deviation over random initializations.}
  \label{fig:deconv-learning-curves}
\end{minipage}%
\hfill
\begin{minipage}{.45\textwidth}
  \includegraphics[width=.32\linewidth]{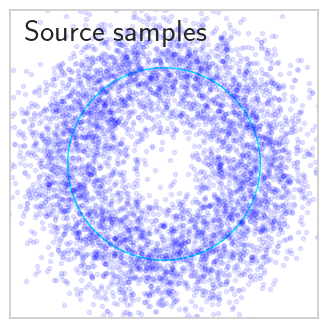}  \includegraphics[width=.32\linewidth]{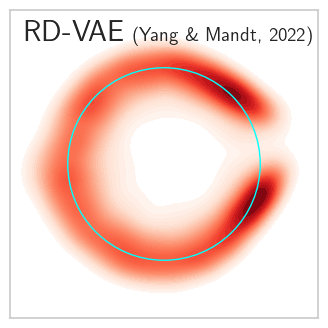} \includegraphics[width=.32\linewidth]{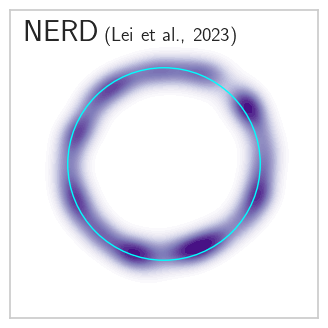} \\
  \includegraphics[width=.32\linewidth]{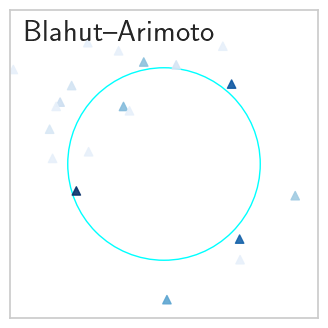}   \includegraphics[width=.32\linewidth]{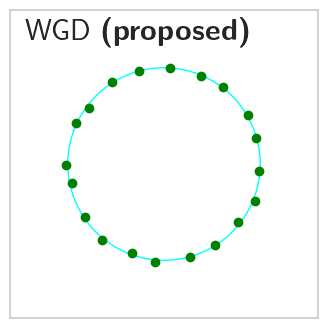} 
  \includegraphics[width=.32\linewidth]{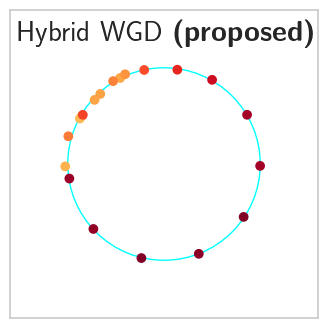}  
  \caption{Visualizing $\mu$ samples (top left), as well as the $\nu$ returned by various algorithms compared to the ground truth $\nu^*$ (cyan). }
  \label{fig:deconv-nus}
\end{minipage}
\end{figure}

\begin{figure}
    \centering
    \includegraphics[width=1\textwidth]{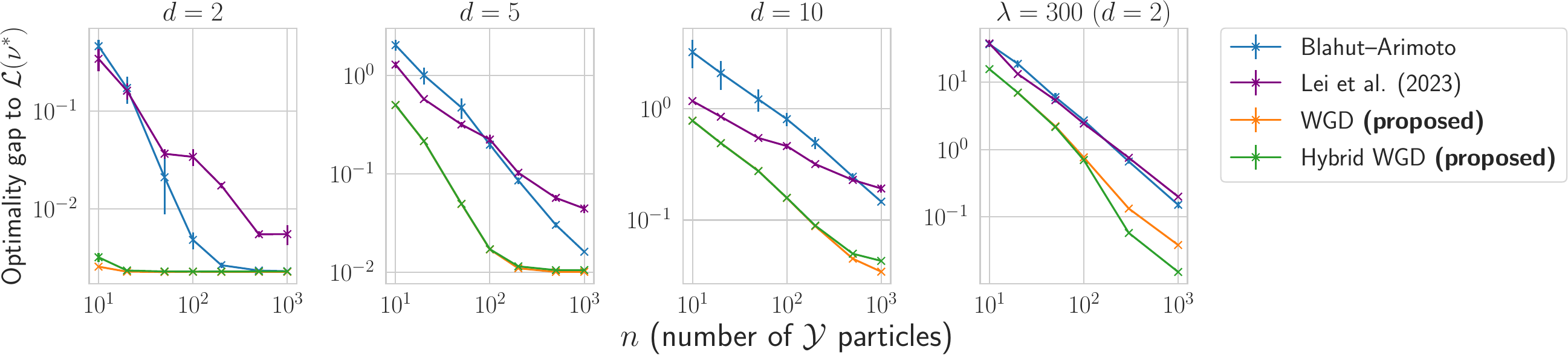}
    \caption{Optimality gap v.s. the number of particles $n$ used, on deconvolution problems with different dimension $d$ and distortion multiplier $\lambda$. The first three panels fix $\lambda=\sigma^{-2}$ and increase $d$, and the right-most panel corresponds to the 2-D problem with a higher $\lambda$ (denoising with a narrower Gaussian kernel). Overall the WGD methods attain higher accuracy for a given budget of $n$.}
    \label{fig:deconv-scaling}
\end{figure}

We use $n=20$ particles for BA, NERD, WGD and its hybrid variant. %
We use a two-layer network for NERD and RD-VAE with some hand-tuning (we replace the softplus activation in the original RD-VAE network by ReLU as it led to difficulty in optimization).
Fig.~\ref{fig:deconv-learning-curves} plots the resulting loss curves and shows that the proposed algorithms converge the fastest to the ground truth value $OPT:=\L(\alpha)$.
In Fig.~\ref{fig:deconv-nus}, we visualize the final $\nu^{(t)}$ at the end of training, compared to the ground truth $\nu^*=\alpha$ supported on the circle (colored in cyan). 
Note that we initialize $\nu^{(0)}$ for BA, WGD, and its hybrid variant to the same $n$ random data samples.
While BA is stuck with the randomly initialized particles and assigns large weights to those closer to the circle, WGD learns to move the particles to uniformly cover the circle. The hybrid algorithm, being able to reweight particles to reduce their transportation cost, learns a different solution where a cluster of particles covers the top-left portion of the circle with small weights while the remaining particles evenly covers the rest.
Unlike our particle-based methods, 
the neural methods generally struggle to place the support of their $\nu$ exactly on the circle.

We additionally compare how the performance of BA, NERD, and the proposed algorithms scale to higher dimensions and a higher $\lambda$ (corresponding to lower entropic regularization in $\L_{EOT}$). Fig.~\ref{fig:deconv-scaling} plots the gap between the converged and the optimal losses for the algorithms, and demonstrates the proposed algorithms to be more particle-efficient and scale more favorably than the alternatives which also use $n$ particles in the reproduction space. We additionally visualize how the converged particles for our methods vary across the R-D trade-off in Fig.~\ref{fig:wgd-quantizers} of the Supplementary Material.

\subsection{Higher-dimensional data}\label{sec:real-world-higher-dim-experiments}

We perform $R(D)$ estimation on higher-dimensional data, including the \emph{physics} and \emph{speech} datasets from \citep{yang2022towards} and MNIST \citep{lecun1998gradient}.
As the memory cost to operating on the full datasets becomes prohibitive, we focus on NERD, RD-VAE, and WGD using mini-batch stochastic gradient descent. 
BA and hybrid WGD do not directly apply in the stochastic setting, as BA updates on random mini-batches can lead to divergence (as discussed in Sec.~\ref{sec:hybrid-algo}).

Fig.~\ref{fig:rd-physics-speech} plots the estimated R-D bounds on the datasets, and compares the convergence speed of WGD and neural methods in both iteration count and compute time.
Overall, we find WGD to require minimal tuning and obtains the tightest R-D upper bounds within the $\log(n)$ rate limit (see Sec.~\ref{sec:estimation-of-r-d}), and consistently obtains tighter bounds than NERD given the same computation budget.

\begin{figure}[t]
    \centering
    \begin{minipage}{.3\textwidth}
    \centering
    \includegraphics[width=1.0\linewidth]{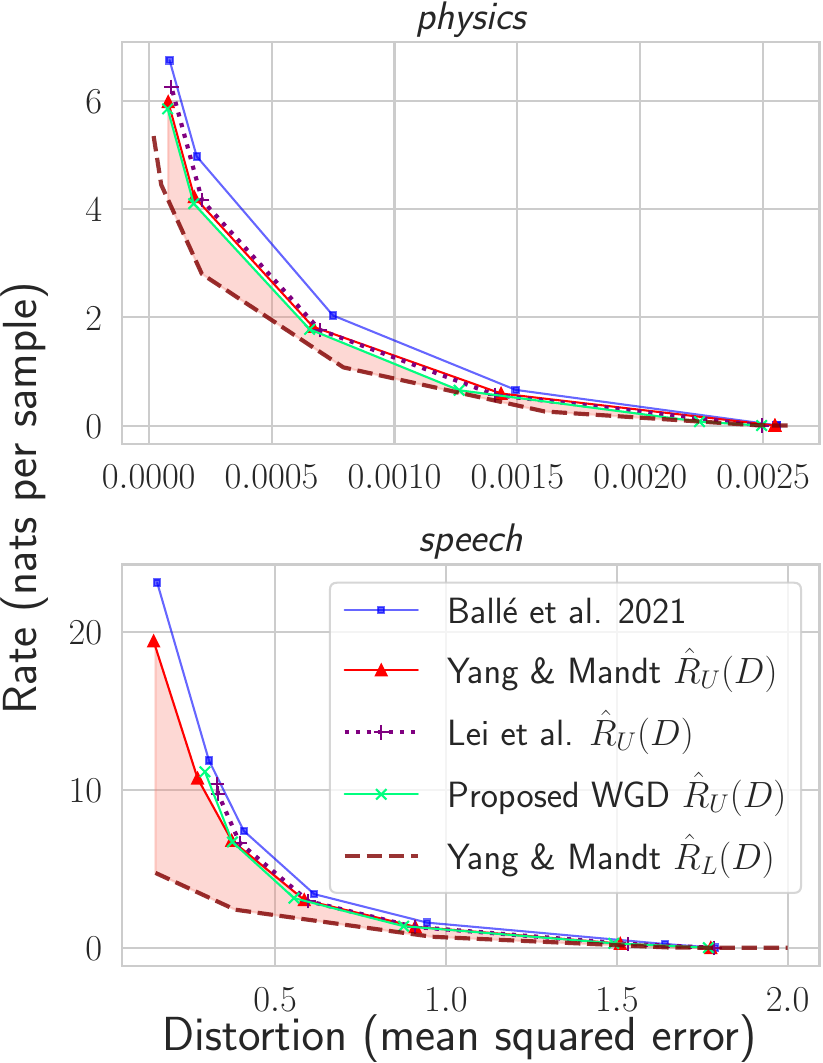} 
    \end{minipage}
    \hfill
    \begin{minipage}{.42\textwidth}
    \centering
    \includegraphics[width=1.0\linewidth]{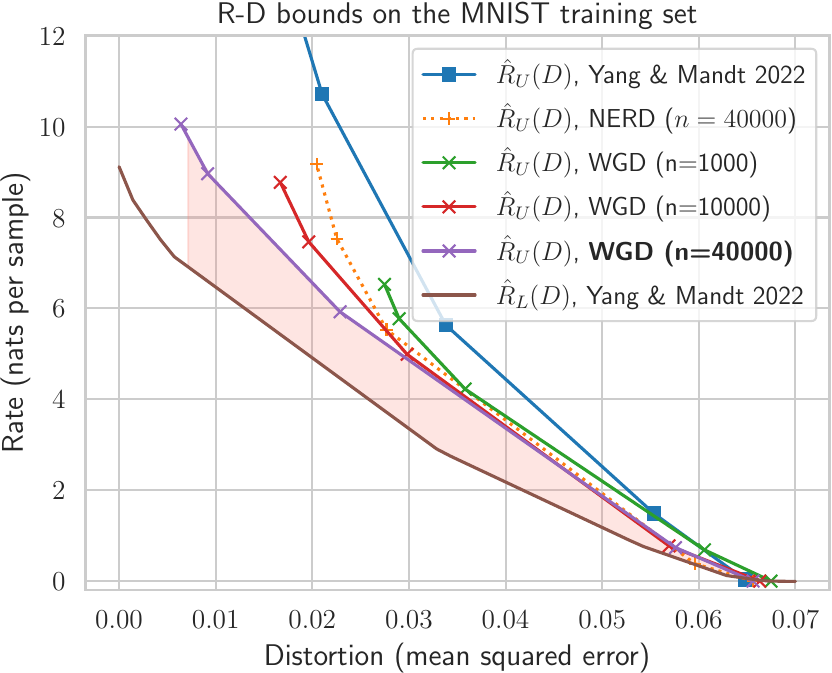}
    \end{minipage}
    \hfill
    \begin{minipage}{.25\textwidth}
    \centering
    \includegraphics[width=1.0\linewidth]{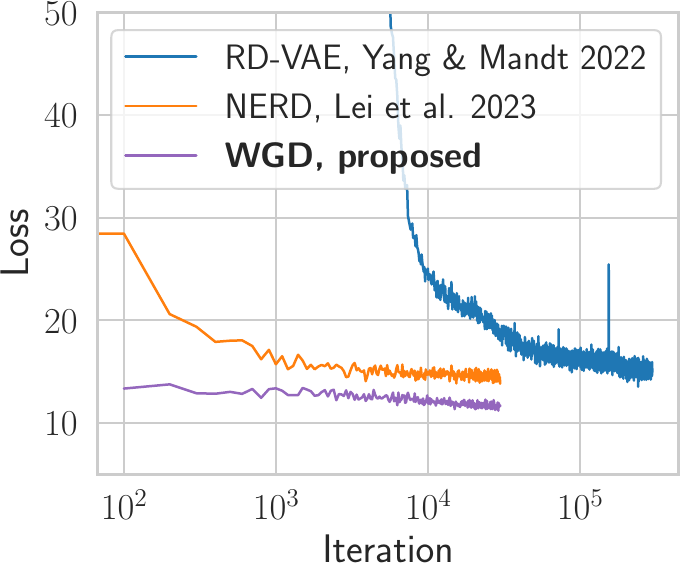} \\
    \includegraphics[width=1.0\linewidth]{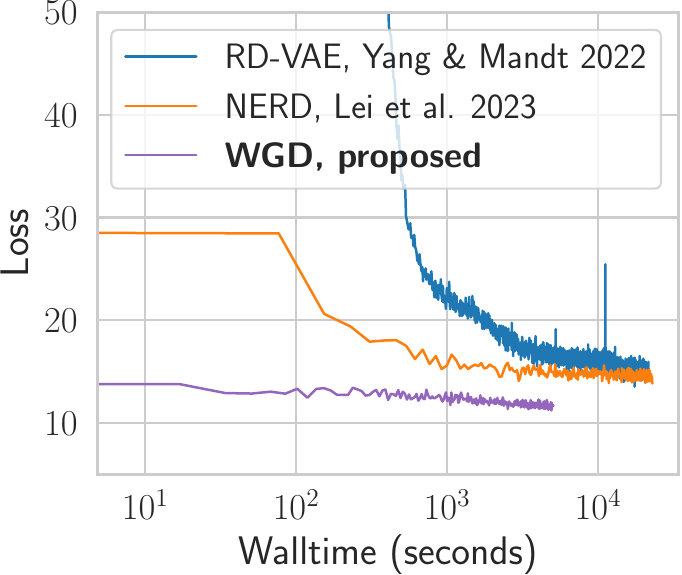} 
    \end{minipage}
    \caption{\textbf{Left, Middle:} R-D bound estimates on the physics, speech datasets \citep{yang2022towards} and MNIST training set. 
     \textbf{Right:} Example speed comparisons of WGD and neural  upper bound methods on MNIST, with WGD converging at least an order of magnitude faster.  On each of the  dataset we also include an R-D \emph{lower} bound estimated using the method of \citep{yang2022towards}.}
    \label{fig:rd-physics-speech}
\end{figure}

\section{Discussions}

In this work, we leverage tools from optimal transport to develop a new approach for estimating the rate-distortion function in the continuous setting. 
Compared to state-of-the-art neural approaches \citep{yang2022towards, lei2022neural}, our Wasserstein gradient descent algorithm offers complementary strengths: 1) It requires a single main hyperparameter $n$ (the number of particles) and no network architecture tuning; and 2) empirically we found it to converge significantly faster and rarely end up in bad local optima; 
increasing $n$ almost always yielded an improvement (unless the bound is already close to being tight). From a modeling perspective, a particle representation may be inherently more efficient when the optimal reproduction distribution is singular or has many disconnected modes (shown, e.g., in Figs.~\ref{fig:R-D-traversal} and \ref{fig:deconv-nus}). %
However, like NERD \citep{lei2022neural}, our method has a fundamental limitation -- it requires an $n$ that is exponential in the rate of the targeted $R(D)$ point to estimate it accurately (see Sec.~\ref{sec:estimation-of-r-d}). 
Thus, a neural method like RD-VAE \citep{yang2022towards} may still be preferable on high-rate sources, while our particle-based method stands as a state-of-the-art solution on lower-rate sources with substantially less tuning or computation requirements.

Besides R-D estimation, our algorithm also applies to the mathematically equivalent problems of maximum likelihood deconvolution and projection under an entropic optimal transport (EOT) cost, and may find other connections and applications. Indeed, the EOT projection view of our algorithm is further related to optimization-based approaches to sampling \citep{wibisono18sampling},  variational inference \citep{liu2016stein}, and distribution compression \citep{shetty2021distribution}.
Our particle-based algorithm also generalizes optimal quantization (corresponding to $\epsilon=0$ in $\L_{EOT}$ and projection of the source under the Wasserstein distance \citep{graf2007foundations, gray2013}) to incorporate a rate constraint ($\epsilon > 0$), and it would be interesting to explore the use of the resulting rate-distortion optimal quantizer for practical data compression and communication.

\bibliographystyle{unsrtnat}

\bibliography{references}
\section*{Broader Impacts}

Improved estimates on the fundamental cost of data compression aids the development and analysis of compression algorithms. This helps researchers and engineers make better decisions about where to allocate their resources to improve certain compression algorithms, and can translate to economic gains for the broader society. However, like most machine learning algorithms trained on data, the output of our estimator is only accurate insofar as the training data is representative of the population distribution of interest, and practitioners need to ensure this in the data collection process.
\section*{Acknowledgements}

We thank anonymous reviewers for feedback on the manuscript.
Yibo Yang acknowledges support from the Hasso Plattner Foundation.
Marcel Nutz acknowledges support from NSF Grants DMS-1812661, DMS-2106056.
Stephan Mandt acknowledges support from the National Science Foundation (NSF) under the NSF CAREER Award 2047418; NSF Grants 2003237 and 2007719, the Department of Energy, Office of Science under grant DE-SC0022331, the IARPA WRIVA program, as well as gifts from Intel, Disney, and Qualcomm.

 \newpage
 \begin{center}

\LARGE{Supplementary Material for \\
Estimating the Rate-Distortion Function by Wasserstein Gradient Descent}
\end{center}
\vspace{2em}

We review probability theory background and explain our notation from the main text in Section \ref{sec:app-technical-review}, give the formulas we used for numerically estimating an $R(D)$ upper bound in Section~\ref{sec:app-numerical-rd-est},
provide additional discussions and proofs regarding Wasserstein gradient descent in Section \ref{sec:app-wgd-details}, 
elaborate on the connections between the R-D estimation problem and variational inference/learning in Section \ref{sec:app-rd-connection-to-variational}, provide additional experimental results and details in Section \ref{sec:app-further-experiments}, and list an example implementation of WGD in Section \ref{sec:app-wgd-code-listing}. Our code and can be found at \url{https://github.com/yiboyang/wgd}.

\section{Notions from probability theory}\label{sec:app-technical-review}

In this section we collect notions of probability theory used in the main text. See, e.g., \citep{ccinlar2011probability} or \citep{folland1999real} for more background. 

\paragraph{Marginal and conditional distributions.}
The source and reproduction spaces $\setX, \setY$ are equipped with sigma-algebras $\mathcal{A}_\setX$ and $\mathcal{A}_{\setY}$, respectively. Let $\setX \times \setY$ denote the product space equipped with the product sigma algebra $\mathcal{A}_\setX \otimes \mathcal{A}_{\setY}$. For any probability measure $\pi$ on $\setX \times \setY$, its first \textbf{marginal} is 
\begin{align*}
    \pi_1(A) := \pi(A \times \setY), \quad A \in \mathcal{A}_\setX,
\end{align*}
which is a probability measure on $\setX$. When $\pi$ is the distribution of a random vector $(X, Y)$, then $\pi_1$ is the distribution of $X$. 
The second marginal of $\pi$ is defined analogously as 
\begin{align*}
    \pi_2(B) := \pi(\setX \times B), \quad B \in \mathcal{A}_\setY.
\end{align*}

A Markov \textbf{kernel} or \textbf{conditional distribution} $K(x,dy)$ is a map $\setX\times\mathcal{A}_{\setY}\to[0,1]$ such that
\begin{enumerate}
    \item $K(x, \cdot)$ is a probability measure on $\setY$ for each $x \in \setX$;
    \item the function $x \mapsto K(x, B)$ is measurable for each set $B \in  \mathcal{A}_{\setY}$.
\end{enumerate}
When speaking of the conditional distribution of a random variable~$Y$ given another random variable~$X$, we occasionally also use the notation $Q_{Y|X}$ from information theory \citep{polyanskiy2014lecture}. Then, $Q_{Y|X=x}(B) = K(x, B)$ is the conditional probability of the event $\{ Y \in B\}$ given $X=x$.

Suppose that a probability measure $\mu$ on $\setX$ is given, in addition to a kernel $K(x,dy)$. Together they define a unique measure $\mu \otimes K$ on the product space $\setX \times \setY$. For a rectangle set $A \times B \in \mathcal{A}_\setX \otimes \mathcal{A}_{\setY}$,
\begin{align*}
    \mu \otimes K (A \times B) = \int_A \mu(dx) K(x, B), \quad A \in \mathcal{A}_\setX, B \in \mathcal{A}_\setY.
\end{align*}
The measure $\pi:=\mu \otimes K$ has first marginal $\pi_{1}=\mu$.

The classic product measure is a special case of this construction. Namely, when a measure $\nu$ on $\setY$ is given, using the constant kernel $K(x,dy):=\nu(dy)$ (which does not depend on~$x$) gives rise to the product measure $\mu \otimes \nu$,
\begin{align*}
    \mu \otimes \nu (A \times B) = \mu(A) \nu(B), \quad A \in \mathcal{A}_\setX, B \in \mathcal{A}_\setY.
\end{align*}

Under mild conditions (for instance when $\setX,\setY$ are Polish spaces equipped with their Borel sigma algebras, as in the main text), any probability measure $\pi$ on $\setX\times\setY$ is of the above form. Namely, the \textbf{disintegration} theorem asserts that $\pi$ can be written as $\pi=\pi_{1}\otimes K$ for some kernel $K$. When $\pi$ is the joint distribution of a random vector $(X,Y)$, this says that there is a measurable version of the conditional distribution $Q_{Y|X}$.

\paragraph{Optimal transport.}

Given a measure $\mu$ on $\setX$ and a measurable function $T: \setX \to \setY$, 
the \textbf{pushforward}  (or image measure) of $\mu$ under $T$ is a measure on $\setY$, given by
\begin{align*}
    T_{\#}\mu(B) = \mu( T^{-1}(B)), \quad B \in \mathcal{A}_\setY.
\end{align*}
If $T$ is seen as a random variable and $\mu$ as the baseline probability measure, then $T_{\#}\mu$ is simply the distribution of $T$.

Suppose that $\mu$ and $\nu$ are probability measures on $\setX =\setY=\mathbb{R}^{d}$ with finite second moment. As introduced in the main text, $\Pi(\mu,\nu)$ denotes the set of couplings, i.e., measures $\pi$ on $\setX \times\setY$ with $\pi_{1}=\mu$ and $\pi_{2}=\nu$. 
The 2-Wasserstein distance $W_{2}(\mu,\nu)$ between $\mu$ and $\nu$ is defined as
\[
  W_{2}(\mu,\nu)  = \left(\inf_{\pi\in\Pi(\mu,\nu)} \int \|y-x\|^{2}\pi(dx,dy)\right)^{1/2}.
\]
This indeed defines a metric on the space of probability measures with finite second moment.

We finish this section by giving a proof of the basic equivalence between optimization of $\mathcal{L}_{EOT}$ and $\mathcal{L}_{BA}$, which goes back to \cite[Lemma 1.3 and subsequent discussion]{csiszar1974extremum}.
Recall that $\Pi(\mu, \cdot)$ denotes the set of measures on the product space $\setX \times \setY$ with $\pi_1 = \mu$.
    
\begin{lemma}\label{lem:basicequiv} Set $\epsilon = \lambda^{-1}$. It holds that
\begin{equation*}
    \inf_{\nu \in \mathcal{P}(\setY)} \mathcal{L}_{EOT}(\nu) = \inf_{\nu \in \mathcal{P}(\setY)} \lambda^{-1}\mathcal{L}_{BA}(\nu) \qquad \mbox{and}\qquad  \argmin_{\nu \in \mathcal{P}(\setY)} \mathcal{L}_{EOT}(\nu) = \argmin_{\nu \in \mathcal{P}(\setY)} \mathcal{L}_{BA}(\nu). 
\end{equation*}
\begin{proof}
    
    Both statements will follow from a simple property of relative entropy \citep[Theorem 4.1, ``golden formula'']{polyanskiy2022information}: For $\pi \in \Pi(\mu, \cdot)$ with $H(\pi_2 \mid \nu) < \infty$, the properties of the logarithm reveal
    \begin{align}
        H(\pi \mid \mu \otimes \nu) = H(\pi \mid \mu \otimes \pi_2) + H(\pi_2 \mid \nu), \label{eq:rate-ub-on-mi}
    \end{align}
    and hence $H(\pi\mid \mu \otimes \nu) \geq H(\pi\mid \mu \otimes \pi_2)$. This implies
    \begin{align*}
        \inf_{\nu \in \mathcal{P}(\setY)} \mathcal{L}_{EOT}(\nu) &= \inf_{\pi \in \Pi(\mu, \cdot)} \int \rho d\pi + \epsilon H(\pi\mid \mu \otimes \pi_2) \\
        &= \inf_{\pi \in \Pi(\mu, \cdot)} \inf_{\nu \in \mathcal{P}(\setY)} \int \rho d\pi + \epsilon H(\pi\mid \mu \otimes \nu) \\
        &= \frac{1}{\lambda} \inf_{\nu \in \mathcal{P}(\setY)} \mathcal{L}_{BA}(\nu).
    \end{align*}
    Further, any optimizer $\nu^*$ of $\mathcal{L}_{BA}$ with corresponding optimal ``coupling'' $\pi^* \in \Pi(\mu, \cdot)$ must satisfy $H(\pi_2^* \mid \nu^*) = 0$ (otherwise, taking $\nu^* = \pi^*_2$ has better objective) and thus $\pi_2^* = \nu^*$ and $\pi^* \in \Pi(\mu, \nu^*)$, therefore $\nu^*$ is also an optimizer of $\mathcal{L}_{EOT}$ by the above equality. Conversely, any optimizer $\nu^*$ of $\mathcal{L}_{EOT}$ with coupling $\pi^* \in \Pi(\mu, \nu^*) \subset \Pi(\mu, \cdot)$ clearly yields a feasible solution for $\mathcal{L}_{BA}$ as well, and hence is also an optimizer of $\mathcal{L}_{BA}$ by the above equality.
\end{proof}
\end{lemma}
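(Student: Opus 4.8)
\textbf{Proof proposal for Lemma \ref{lem:basicequiv}.}

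The plan is to reduce both the value equality and the argmin equality to the ``golden formula'' (chain rule) for relative entropy, which states that for $\pi \in \Pi(\mu,\cdot)$ and $\nu \in \mathcal{P}(\setY)$ with $H(\pi_2 \mid \nu) < \infty$,
\[
H(\pi \mid \mu \otimes \nu) = H(\pi \mid \mu \otimes \pi_2) + H(\pi_2 \mid \nu).
\]
First I would recall or cite this identity (e.g.\ from a standard information theory reference such as \citep{polyanskiy2022information}), taking a moment to check the absolute continuity bookkeeping: if $H(\pi \mid \mu \otimes \nu) = \infty$ the inequality $H(\pi \mid \mu \otimes \nu) \geq H(\pi \mid \mu \otimes \pi_2)$ still holds trivially, and if it is finite then $\pi_2 \ll \nu$ and the chain rule applies cleanly. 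The key consequence is the pointwise (in $\pi$) bound $H(\pi \mid \mu \otimes \nu) \geq H(\pi \mid \mu \otimes \pi_2)$, with equality iff $\nu = \pi_2$ (up to $\pi_2$-null sets), i.e.\ iff $H(\pi_2 \mid \nu) = 0$.

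Next, for the value equality, I would unfold $\inf_{\nu} \mathcal{L}_{EOT}(\nu) = \inf_{\nu}\inf_{\pi \in \Pi(\mu,\nu)} \int \rho\, d\pi + \epsilon H(\pi\mid\mu\otimes\nu)$ and swap the two infima: jointly minimizing over $(\nu,\pi)$ with the constraint $\pi_2 = \nu$ is the same as minimizing over all $\pi \in \Pi(\mu,\cdot)$ of $\int \rho\,d\pi + \epsilon H(\pi\mid\mu\otimes\pi_2)$, because for a fixed $\pi$ the choice $\nu = \pi_2$ is forced. Then I would observe that this last quantity equals $\inf_{\pi\in\Pi(\mu,\cdot)}\inf_{\nu\in\mathcal{P}(\setY)} \int\rho\,d\pi + \epsilon H(\pi\mid\mu\otimes\nu)$, since dropping the constraint $\nu=\pi_2$ and minimizing over all $\nu$ cannot increase the value and, by the chain rule inequality above, cannot decrease it either (the minimum over $\nu$ is attained at $\nu=\pi_2$). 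Recognizing $\inf_\nu \int\rho\,d\pi + \epsilon H(\pi\mid\mu\otimes\nu)$ as $\epsilon$ times the inner expression defining $\mathcal{L}_{BA}$ (after substituting $\epsilon = \lambda^{-1}$ and using the kernel form of \eqref{eq:def-ba-functional}), and swapping the order of the two infima back, yields $\inf_\nu \mathcal{L}_{EOT}(\nu) = \lambda^{-1}\inf_\nu \mathcal{L}_{BA}(\nu)$.

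For the argmin equality I would argue both inclusions. If $\nu^*$ minimizes $\mathcal{L}_{BA}$ with optimal coupling $\pi^* \in \Pi(\mu,\cdot)$, then $H(\pi_2^* \mid \nu^*) = 0$ — otherwise replacing $\nu^*$ by $\pi_2^*$ strictly lowers the objective by the chain rule, contradicting optimality — so $\pi_2^* = \nu^*$, hence $\pi^* \in \Pi(\mu,\nu^*)$ is feasible for $\mathcal{L}_{EOT}(\nu^*)$ and achieves the common optimal value, making $\nu^*$ a minimizer of $\mathcal{L}_{EOT}$. Conversely any minimizer $\nu^*$ of $\mathcal{L}_{EOT}$ with its coupling $\pi^* \in \Pi(\mu,\nu^*) \subseteq \Pi(\mu,\cdot)$ is automatically feasible for $\mathcal{L}_{BA}(\nu^*)$, and since the two optimal values agree, $\nu^*$ minimizes $\mathcal{L}_{BA}$ as well. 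I do not anticipate a serious obstacle here; the only delicate point is making sure the infimum-swap and the ``$H(\pi_2\mid\nu)=0 \Rightarrow \pi_2 = \nu$'' step are justified without assuming a minimizer exists a priori — both follow from the chain rule inequality holding for \emph{all} admissible $\pi$, so the argument is valid whether or not the infima are attained (the argmin statement being vacuous if they are not).
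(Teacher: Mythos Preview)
Your proposal is correct and follows essentially the same route as the paper's proof: invoke the chain rule (``golden formula'') $H(\pi\mid\mu\otimes\nu)=H(\pi\mid\mu\otimes\pi_2)+H(\pi_2\mid\nu)$ to collapse the double infimum, and then use the $H(\pi_2^*\mid\nu^*)=0$ argument in both directions for the argmin equality. Your extra care with the absolute continuity cases and the infimum-swap justification is slightly more explicit than the paper's version but otherwise identical in spirit.
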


\section{Numerical estimation of rate and distortion from a reproduction distribution} \label{sec:app-numerical-rd-est}

Given a reproduction distribution $\nu$ and a kernel $\setX\times\mathcal{A}_{\setY}\to[0,1]$, the tuple $(\mathcal{D}, \mathcal{R}) \in \mathbb{R}^2$ defined by 
\[
\mathcal{D} := \int \rho d (\mu \otimes K )
\]
and 
\[
\mathcal{R} :=H(\mu \otimes K | \mu \otimes \nu)
\]
lies above the $R(D)$ curve. 
This again follows from the variational inequality \eqref{eq:rate-ub-on-mi} $\mathcal{R} \geq H(\pi | \mu \times (\mu \times K)_2) = I(\mu \times K)$, so that $\mathcal{R} \geq R(\mathcal{D}) =: \inf_{\pi \in \Pi(\pi, \cdot): \pi(\rho)\leq \mathcal{D}} I(\pi) $.

Given only a reproduction distribution $\nu$, we will construct a kernel $K$ from $\nu$ and use $(\mu, K)$ to compute $(\mathcal{D}, \mathcal{R})$, letting 
\[
\frac{d K(x, \cdot)}{d \nu} (y) = \frac{e^{-\lambda \rho(x, y)}}{\int e^{-\lambda \rho(x, \tilde{y})} \nu(d \tilde{y})}
\]
as in the first step of the BA algorithm. This choice of $K$ is in fact optimal as it achieves the minimum in the definition of $\L_{BA}$ \eqref{eq:def-ba-functional} for the given $\nu$.
Plugging $K$ into the formulas for $\mathcal{D}$ and $\mathcal{R}$ gives
\begin{align}
    \mathcal{D} = \int \rho(x, y) e^{\varphi(x) - \lambda \rho(x, y)} \mu \otimes \nu(dx, dy) 
\end{align}
and
\begin{align}
    \mathcal{R} &= \int \int \log \left(\frac{K(x, dy)}{\nu(dy)} \right) K(x, dy) \mu(dx) \\ \nonumber
    &= \int [\varphi(x) - \lambda \rho(x, y)] e^{\varphi(x) - \lambda \rho(x, y)} \mu \otimes \nu(dx, dy),
\end{align}

where we introduced the shorthand
\begin{equation}
\varphi(x) := -\log \int_\setY e^{ - \lambda \rho(x, y)} \nu(dy). \label{eq:phi-x-from-ba-e-step}
\end{equation}
Note that we have the following relation (which explains \eqref{eq:ba-mle-objective})
\[
\L_{BA}(\nu) = \mathcal{R} + \lambda \mathcal{D} = \int \varphi(x) \mu(dx) .
\]

Let $\nu$ be an $n$-point measure, $\nu = \sum_{i=1}^n w_i \delta_{y_i}$, e.g., the output of WGD or in the inner step of NERD. Then $\varphi(x)$ can be evaluated exactly as a finite sum over the $\nu$ particles, and the expressions above for $\mathcal{D}$ and $\mathcal{R}$ (which are integrals w.r.t. the product distribution $\mu \otimes \nu$)
can be estimated as sample averages. That is, given $m$ independent samples $\{x_i\}_{i=1}^m$ from $\mu$, we compute unbiased estimates
\begin{align*}
\hat{\mathcal{D}} &= \sum_{i=1}^m \sum_{j=1}^n \frac{1}{m} w_j \rho(x_i, y_j) e^{\varphi(x_i) - \lambda \rho(x_i, y_j)}, \\
\hat{\mathcal{R}} &= \sum_{i=1}^m \sum_{j=1}^n \frac{1}{m} w_j  [\varphi(x_i) - \lambda \rho(x_i, y_j)] e^{\varphi(x_i) - \lambda \rho(x_i, y_j)},
\end{align*}
where $\varphi(x) = - \log \sum_{j=1}^n e^{\varphi(x_i) - \lambda \rho(x_i, y_j)}$. Similarly, a sample mean estimate for $\L_{BA}$ is given by
\begin{align}
    \hat{\mathcal{L}}_{BA}(\nu) = \frac{1}{m} \sum_{i=1}^m  \varphi(x_i) =  - \frac{1}{m} \sum_{i=1}^m  \log \left( \sum_{j=1}^n e^{\varphi(x_i) - \lambda \rho(x_i, y_j)} \right). \label{eq:sample-mean-estimate-for-rate-functional}
\end{align}

In practice, we found it simpler and numerically more stable to instead compute $\hat{\mathcal{R}}$ as 
\[
\hat{\mathcal{R}} = \hat{\mathcal{L}}_{BA}(\nu) - \lambda \hat{\mathcal{D}}.
\]
Whenever possible, we avoid exponentiation and instead use \texttt{logsumexp} to prevent numerical issues.

\section{Wasserstein gradient descent}\label{sec:app-wgd-details}

\subsection{On Wasserstein gradients of the EOT and rate functionals}\label{sec:app-deriving-wg}

First, we elaborate on the Wasserstein gradient of the EOT functional $\mathcal{L}_{EOT}(\nu)$.
That the dual potential from Sinkhorn's algorithm is differentiable follows from the fact that optimal dual potentials satisfy the Schrödinger equations (cf.~\cite[Corollary 2.5]{nutz2021introduction}). Differentiability was shown in \cite[Theorem 2]{genevay2019sample} in the compact setting, and in \cite[Proposition 1]{mena2019statistical} in unbounded settings. While \cite{mena2019statistical} only states the result for quadratic cost, the approach of Proposition 1 therein applies more generally.

Below, we compute the Wasserstein gradient of the rate functional $\mathcal{L}_{BA}(\nu)$.
Recall from \eqref{eq:ba-mle-objective},
\[\mathcal{L}_{BA}(\nu) = \int - \log \int \exp(-\lambda\rho(x, y)) \nu(dy) \mu(dx).
\]
Under sufficient integrability on $\mu$ and $\nu$ to exchange the order of limit and integral, we can calculate the first variation as
\begin{align*}
    \lim_{\varepsilon \rightarrow 0} \frac{\mathcal{L}((1-\varepsilon)\nu + \varepsilon \tilde\nu) - \mathcal{L}(\nu)}{\varepsilon} &= - \int \lim_{\varepsilon \rightarrow 0} \frac{1}{\varepsilon} \log \left[ \frac{\int \exp(-\lambda\rho(x, y)) (\nu + \varepsilon (\tilde\nu - \nu))(dy)}{\int \exp(-\lambda\rho(x, y))\nu(dy)}\right] \mu(dx) \\
    &= - \int \lim_{\varepsilon \rightarrow 0} \frac{1}{\varepsilon} \log \left[ 1 + \frac{\int \exp(-\lambda\rho(x, y)) \varepsilon (\tilde\nu - \nu)(dy)}{\int \exp(-\lambda\rho(x, y))\nu(dy)}\right] \mu(dx) \\
    &= \iint -\frac{\exp(-\lambda\rho(x, y))}{\int \exp(-\lambda\rho(x, \tilde{y})) \nu(d\tilde{y})} \mu(dx) \,(\tilde\nu - \nu)(dy),
\end{align*}
where the last equality uses $\lim_{\varepsilon \rightarrow 0}\frac{1}{\varepsilon} \log(1+\varepsilon x) = x$ and Fubini's theorem. Thus the first variation $\psi^\nu$ of $\mathcal{L}_{BA}$ at $\nu$ is
\begin{align}
\psi^\nu(y) = \int -\frac{\exp(-\lambda\rho(x, y))}{\int \exp(-\lambda\rho(x, \tilde{y})) \nu(d\tilde{y})} \mu(dx).  \label{eq:app-first-variation-of-rate-functional}
\end{align}
To find the desired Wasserstein gradient of $\mathcal{L}_{BA}$, it remains to take the Euclidean gradient of $\psi^\nu$, i.e., $\nabla \mathcal{L}_{BA}(\nu)=\nabla \psi^\nu$.
Doing so gives us the desired Wasserstein gradient:
\begin{align}
    \nabla V_{\mathcal{L}_{BA}}(\nu)[y] &=\nabla_y \psi^\nu(y) \nonumber \\
    &=  \frac{\partial}{\partial y} \int -\frac{\exp(-\lambda\rho(x, y)) }{\int \exp(-\lambda\rho(x, \tilde{y})) \nu(d\tilde{y})} \mu(dx)  \nonumber \\
    &=  \int \frac{\exp(-\lambda\rho(x, y)) \lambda \frac{\partial}{\partial y} \rho(x,y) }{\int \exp(-\lambda\rho(x, \tilde{y})) \nu(d\tilde{y})} \mu(dx),
\end{align}
again assuming suitable regularity conditions on $\rho$ and $\mu$ to exchange the order of integral and differentiation.

\subsection{Proof of Proposition \ref{lem:convergence} (convergence of Wasserstein gradient descent)}

We first provide an auxiliary result.

\begin{lemma}\label{lem:supplementary}
    Let $\gamma_1 \geq \gamma_2 \geq \dots \geq 0$ and $a_t \geq 0$, $t \in \mathbb{N}$, $C > 0$ satisfy $\sum_{t=1}^\infty \gamma_t = \infty$, $\sum_{t=1}^\infty \gamma_t^2 < \infty$, $\sum_{t=1}^\infty a_t \gamma_t < \infty$ and $|a_t - a_{t+1}| \leq C \gamma_t$ for all $t \in \mathbb{N}$. Then $\lim_{t \rightarrow \infty} a_t = 0$.
    \begin{proof}
        The conclusion remains unchanged when rescaling $a_t$ by the constant $C$, and thus without loss of generality $C=1$.
        
        Clearly $\gamma_t\to0$ as $\sum_{t=1}^\infty \gamma_t^2 < \infty$. Moreover, there exists a subsequence of $(a_t)_{t \in \mathbb{N}}$ which converges to zero (otherwise there exists $\delta>0$ such that $a_t\geq \delta>0$ for all but finitely many~$t$, contradicting $\sum_{t=1}^\infty \gamma_t a_t < \infty$). 
        
        Arguing by contradiction, suppose that the conclusion fails, i.e., that there exists a subsequence of $(a_t)_{t \in \mathbb{N}}$ which is uniformly bounded away from zero, say $a_t\geq\delta > 0$ along that subsequence. Using this subsequence and the convergent subsequence mentioned above, we can construct a subsequence 
        $a_{i_1}, a_{i_2}, a_{i_3}, \dots$ where $a_{i_n} \approx 0$ for $n$ odd and $a_{i_n} \geq \delta$ for $n$ even.
        We will show that
        \[
        \sum_{t=i_{2n-1}}^{i_{2n}}a_t \gamma_t \gtrsim \delta^2/2 \qquad\mbox{for all $n \in \mathbb{N}$},
        \] 
        contradicting the finiteness of $\sum_t \gamma_t a_t$. (The notation $\approx$ ($\gtrsim$) indicates (in)equality up to additive terms converging to zero for $n \rightarrow \infty$.)
        
        To ease notation, fix $n$ and set $m = i_{2n-1}$ and $M = i_{2n}$. We show that $\sum_{t=m}^M a_t \gamma_t \gtrsim \delta^2/2$.
        To this end, using $|a_t - a_{t+1}| \leq \gamma_t$ we find 
        \[a_t \geq a_M - \sum_{j=k}^{M-1} \gamma_j \geq \delta - \sum_{j=k}^{M-1} \gamma_j.\] Since $a_{m} \approx 0$, there exists a largest $n_0 \in \mathbb{N}$, $n_0 \geq m$, such that $\sum_{j=n_0}^{M-1} \gamma_j \gtrsim \delta$ (and thus $\sum_{j=n_0}^{M-1} \gamma_j \lesssim \delta - \gamma_{n_0} \approx \delta$ as well). 
        We conclude
        \begin{align*}
        \sum_{t = m}^M \gamma_t a_t \geq \sum_{t=n_0}^M \gamma_t a_t
        \geq \sum_{t=n_0}^M \gamma_t \left(\delta - \sum_{j=k}^{M-1} \gamma_j\right) 
        \gtrsim \delta^2 - \sum_{t=n_0}^M \sum_{j=n_0}^{M} \gamma_t \gamma_j \eins_{\{j \geq k\}}\\
        = \delta^2 - \frac{1}{2} \left(\sum_{t=n_0}^M \gamma_t\right)^2 - \frac{1}{2}\sum_{t=n_0}^M \gamma_t^2
        &\approx \delta^2/2,
        \end{align*}
        where we used that $\sum_{t=n_0}^M \gamma_t^2 \approx 0$. This completes the proof.
    \end{proof}
\end{lemma}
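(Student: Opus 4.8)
The plan is to argue by contradiction, exploiting the tension between $\sum_t \gamma_t = \infty$, which forbids $a_t$ from staying bounded away from zero, and $\sum_t a_t \gamma_t < \infty$, which limits how often and how strongly $a_t$ can return to large values. First I would rescale: multiplying every $a_t$ by $1/C$ leaves all four hypotheses and the conclusion invariant, so I may assume $C=1$, i.e.\ $|a_t - a_{t+1}| \leq \gamma_t$. I would then record two immediate facts. Since $\sum_t \gamma_t^2 < \infty$ we have $\gamma_t \to 0$. And $\liminf_t a_t = 0$: otherwise $a_t \geq \delta > 0$ for all large $t$, whence $\sum_t a_t \gamma_t \geq \delta \sum_t \gamma_t = \infty$, contradicting summability.

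Supposing the conclusion fails, there is a fixed $\delta > 0$ and infinitely many indices with $a_t \geq \delta$. Interleaving these with the $\liminf = 0$ subsequence, I would construct disjoint windows $[m_n, M_n]$ with $m_n < M_n < m_{n+1}$, $a_{m_n} \to 0$, and $a_{M_n} \geq \delta$. The strategy is to show that each such window forces a definite positive amount of mass into $\sum_t a_t \gamma_t$; since there are infinitely many windows, the total diverges, giving the contradiction.

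The heart of the argument is the per-window lower bound. Telescoping the one-step inequality backward from the high endpoint gives, for $t$ in the window, $a_t \geq a_{M_n} - \sum_{j=t}^{M_n-1}\gamma_j \geq \delta - \sum_{j=t}^{M_n-1}\gamma_j$, so $a_t$ stays comparable to $\delta$ until the tail of $\gamma$'s accumulates to $\delta$. Because $a_{m_n}\approx 0$, that tail sum already exceeds $\approx\delta$ at the left end, so I can pick a cutoff $n_0 \in [m_n, M_n]$ with $S := \sum_{t=n_0}^{M_n}\gamma_t \approx \delta$, using $\gamma_t \to 0$ to control the overshoot. Writing $Q := \sum_{t=n_0}^{M_n}\gamma_t^2$, substituting the lower bound, and using the elementary identity $\sum_{n_0 \leq j \leq t \leq M_n}\gamma_t\gamma_j = \tfrac12(S^2 + Q)$, I obtain
\begin{align*}
\sum_{t=n_0}^{M_n} \gamma_t a_t \;\geq\; \sum_{t=n_0}^{M_n}\gamma_t\Big(\delta - \sum_{j=t}^{M_n-1}\gamma_j\Big) \;=\; \delta S - \tfrac12\big(S^2 + Q\big) \;\approx\; \delta^2 - \tfrac12\delta^2 \;=\; \tfrac{\delta^2}{2},
\end{align*}
since $S \approx \delta$ and $Q \to 0$ as a tail of the convergent series $\sum_t \gamma_t^2$. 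Each window therefore contributes $\gtrsim \delta^2/2$, and summing over $n$ contradicts $\sum_t a_t \gamma_t < \infty$.

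The step I expect to be the main obstacle is this final bookkeeping: making the $\approx$ and $\gtrsim$ estimates rigorous by tracking the additive error terms, namely the cutoff overshoot $\gamma_{n_0}$, the residual value $a_{m_n}$, and the window tail $Q$, and verifying that each vanishes as $n\to\infty$ so that the per-window contribution is bounded below by a fixed positive constant for all large $n$. Everything else, including the rescaling, the $\liminf = 0$ argument, and the telescoping lower bound, is routine once the window construction is fixed.
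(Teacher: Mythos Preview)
Your proposal is correct and follows essentially the same approach as the paper's proof: rescale to $C=1$, use $\sum \gamma_t = \infty$ to get $\liminf a_t = 0$, assume for contradiction a subsequence with $a_t \geq \delta$, build alternating windows $[m_n,M_n]$, telescope backward from $M_n$ to get $a_t \geq \delta - \sum_{j=t}^{M_n-1}\gamma_j$, choose a cutoff $n_0$ where the tail sum is $\approx \delta$, and use the identity $\sum_{j\leq t}\gamma_j\gamma_t = \tfrac12(S^2+Q)$ to extract a per-window contribution $\gtrsim \delta^2/2$. Your explicit bookkeeping of the error terms (the overshoot $\gamma_{n_0}$, the residual $a_{m_n}$, and the tail $Q$) and your insistence on disjoint windows are in fact slightly cleaner than the paper's own presentation, which handles these with the same informal $\approx$, $\gtrsim$ notation.
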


\begin{proof}[Proof of Proposition \ref{lem:convergence}]
Using the linear approximation property in \eqref{eq:eotgradientprop}, we calculate
\begin{align*}
\mathcal{L}(\nu^{(n)}) - \mathcal{L}(\nu^{(0)}) &= \sum_{t=0}^{n-1} \mathcal{L}(\nu^{(t+1)}) - \mathcal{L}(\nu^{(t)}) \\
&= \sum_{t=0}^{n-1} - \gamma_t \int \|\nabla V_{\mathcal{L}}(\nu^{(t)})\|^2 \,d\nu^{(t)} + \gamma_t^2\, o\left(\int \|\nabla V_{\mathcal{L}}(\nu^{(t)})\|^2 \,d\nu^{(t)}\right).
\end{align*}
As $\mathcal{L}(\nu^{(0)})$ is finite and $\mathcal{L}(\nu^{(n)})$ is bounded from below, it follows that
\[
\sum_{t=0}^\infty \gamma_t \int \|\nabla V_{\mathcal{L}}(\nu^{(t)})\|^2 \,d\nu^{(t)} < \infty.
\]
The claim now follow by applying Lemma \ref{lem:supplementary} with $a_t = \int \|\nabla \psi^{\nu^{(t)}}\|^2 \,d\nu^{(t)}$; note that the assumption in the lemma, $|a_t - a_{t+1}| \leq C \gamma_t$, is satisfied due to the second inequality in \eqref{eq:eotgradientprop} and a short calculation below
\begin{align*}
    \left| \int \|\nabla \psi^{\nu^{(t)}}\|^2 - \int \|\nabla \psi^{\nu^{(t+1)}}\|^2 \right| & \leq C W_2 (\nu^{(t)}, \nu^{(t+1)}) \\
    & \leq C \left( \int \int \| x - y\|^2 \delta_{x - \gamma_t \nabla V_{\mathcal{L}}(\nu^{(t)})[x]} (dy) \nu^{(t)}(dx) \right)^{\frac{1}{2}} \\
    & \leq C \gamma_t \left( \int \|  \nabla V_{\mathcal{L}}(\nu^{(t)}) \|^2 \nu^{(t)}(dx) \right)^{\frac{1}{2}} \\
     & \leq C \gamma_t \sup_{t'}  \left( \int \|  \nabla V_{\mathcal{L}}(\nu^{(t')}) \|^2 \nu^{(t')}(dx) \right)^{\frac{1}{2}} \\
     & \leq C' \gamma_t,
\end{align*}
where we use $\sup_{t}  \int \|  \nabla V_{\mathcal{L}}(\nu^{(t)}) \|^2 \nu^{(t)}(dx) < \infty$ in the last step.
\end{proof}

\subsection{Proof of Proposition \ref{prop:samplecomplex} (sample complexity)}
Recall that $\setX = \setY = \mathbb{R}^d$ and $\rho(x, y) = \|x-y\|^2$ in this proposition. For the proof, we will need the following lemma which is of independent interest. We write $\nu \leq_c \mu$ if $\nu$ is dominated by $\mu$ in convex order, i.e., $\int f \,d\nu \leq \int f \,d\mu$ for all convex functions $f : \mathbb{R}^d\rightarrow \mathbb{R}$.

\begin{lemma}\label{lem:convexorder}
Let $\mu$ have finite second moment. Given $\nu \in \mathcal{P}(\mathbb{R}^d)$, there exists $\tilde\nu \in \mathcal{P}(\mathbb{R}^d)$ with $\tilde\nu \leq_c \mu$ and
\[
\mathcal{L}_{EOT}(\mu, \tilde\nu) \leq \mathcal{L}_{EOT}(\mu, \nu).
\]
This inequality is strict if $\nu {\not\leq}_c \mu$. In particular, any optimizer $\nu^*$ of \eqref{eq:master-eot-problem} satisfies $\nu^*\leq_c\mu$.
\begin{proof}
Because this proof uses disintegration over $\setY$, it is convenient to reverse the order of the spaces in the notation and write a generic point as $(x,y)\in\setY\times\setX$. Consider $\pi \in \Pi(\nu, \mu)$ and its disintegration $\pi=\nu(dx)\otimes K(x,dy)$ over $x\in\setY$. Define $T : \mathbb{R}^d \rightarrow \mathbb{R}^d$ by
		\[
		T(x) := \int y \,K(x, dy).
		\]
		Define also $\tilde{\pi} := (T, \text{id})_{\#} \pi$ and $\tilde{\nu} := \tilde{\pi}_1$. From the definition of $T$, we see that $\tilde\pi$ is a martingale, thus $\tilde\nu \leq_c \mu$. Moreover, $\tilde\nu \otimes \mu = (T, \text{id})_{\#} \nu \otimes \mu$. The data-processing inequality now shows that
		\[
		H(\tilde{\pi}| \tilde\nu \otimes \mu) \leq H(\pi| \nu \otimes \mu).
		\]
On the other hand, $\int \|\int \tilde{y} \,K(x, d\tilde{y}) - y\|^2 \,K(x, dy) \leq \int \|x-y\|^2 K(x, dy)$ since the barycenter minimizes the squared distance, and this inequality is strict whenever $x \neq \int \tilde{y} K(x, d\tilde{y})$. Thus
		\[
		\int \|x-y\|^2 \,\tilde\pi(dx, dy) \leq \int \|x-y\|^2 \,\pi(dx, dy),
		\]
        and the inequality is strict unless $T(x) = x$ for $\nu$-a.e.~$x$, which in turn is equivalent to $\pi$ being a martingale. The claims follow.
\end{proof}
\end{lemma}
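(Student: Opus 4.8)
The plan is to construct $\tilde\nu$ explicitly from an optimal coupling for $\mathcal{L}_{EOT}(\mu,\nu)$ via a barycentric projection, and to show that this single operation simultaneously (i) places $\tilde\nu$ below $\mu$ in convex order and (ii) does not increase either of the two terms in the EOT objective. First I would fix an optimal coupling $\pi\in\Pi(\nu,\mu)$ (existence is standard for quadratic cost, finite second moments, and $\epsilon>0$; and if $\mathcal{L}_{EOT}(\mu,\nu)=\infty$ the inequality is trivial). I would disintegrate it as $\pi=\nu(dx)\otimes K(x,dy)$ over the reproduction coordinate and define the barycentric map $T(x):=\int y\,K(x,dy)$, then set $\tilde\nu:=T_\#\nu$ and $\tilde\pi:=(T,\mathrm{id})_\#\pi$, so that $\tilde\pi\in\Pi(\tilde\nu,\mu)$.

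Next I would verify the three required properties of this construction. For the convex-order claim, I would check that $\tilde\pi$ is a martingale coupling: conditioning on $\tilde X=T(x)$ mixes the kernels $K(x',\cdot)$ over the fiber $\{x' : T(x')=T(x)\}$, and each such kernel has barycenter $T(x')=T(x)$, so the conditional barycenter of $Y$ equals $\tilde X$; Strassen's theorem then yields $\tilde\nu\leq_c\mu$. For the entropy term, I would note that $\tilde\pi=g_\#\pi$ and $\tilde\nu\otimes\mu=g_\#(\nu\otimes\mu)$ under the common map $g(x,y)=(T(x),y)$, so the data-processing inequality for relative entropy gives $H(\tilde\pi\mid\tilde\nu\otimes\mu)\leq H(\pi\mid\nu\otimes\mu)$. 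For the cost term, I would use that $T(x)$ is the unique minimizer of $z\mapsto\int\|z-y\|^2\,K(x,dy)$, whence $\int\|x-y\|^2\,d\tilde\pi\leq\int\|x-y\|^2\,d\pi$.

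Combining these, and using that $\tilde\pi$ is feasible for $\mathcal{L}_{EOT}(\mu,\tilde\nu)$, I obtain
\[
\mathcal{L}_{EOT}(\mu,\tilde\nu)\leq\int\rho\,d\tilde\pi+\epsilon H(\tilde\pi\mid\tilde\nu\otimes\mu)\leq\int\rho\,d\pi+\epsilon H(\pi\mid\nu\otimes\mu)=\mathcal{L}_{EOT}(\mu,\nu).
\]
For strictness I would argue contrapositively: if $T=\mathrm{id}$ holds $\nu$-a.e., then $\pi$ itself is a martingale coupling from $\nu$ to $\mu$, so $\nu\leq_c\mu$ by Strassen; hence $\nu\not\leq_c\mu$ forces $T\neq\mathrm{id}$ on a set of positive $\nu$-mass, where the barycenter inequality is strict. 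Since the cost enters with coefficient $1$, this produces a strict drop, giving $\mathcal{L}_{EOT}(\mu,\tilde\nu)<\mathcal{L}_{EOT}(\mu,\nu)$. The final ``in particular'' statement is then immediate, since an optimizer $\nu^*$ cannot admit a strictly better $\tilde\nu$.

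I expect the main obstacle to be the rigorous verification of the martingale property of $\tilde\pi$: specifically, that conditioning on the \emph{projected} variable $T(x)$ rather than on $x$ itself still yields conditional barycenters equal to $\tilde X$, since this is exactly what licenses the appeal to Strassen's theorem and underpins both the convex-order conclusion and the strictness dichotomy. The data-processing and barycenter steps are comparatively routine, but I would also confirm that $\tilde\nu$ inherits a finite second moment — which follows from $\tilde\nu\leq_c\mu$ applied to the convex function $\|\cdot\|^2$ — so that all quantities appearing above are well defined.
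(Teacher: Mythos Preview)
Your proposal is correct and follows essentially the same route as the paper: disintegrate a coupling $\pi\in\Pi(\nu,\mu)$, define the barycentric map $T(x)=\int y\,K(x,dy)$, push forward via $(T,\mathrm{id})$ to obtain $\tilde\pi$ and $\tilde\nu$, then combine the data-processing inequality for the entropy term with the barycenter-minimizes-squared-distance argument for the cost term, and read off strictness from $T\neq\mathrm{id}$ on a set of positive measure. Your added care about the martingale property on fibers of $T$, the explicit choice of an optimal $\pi$, and the finite second moment of $\tilde\nu$ are all well placed; the paper's proof is terser on exactly these points but otherwise identical.
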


\begin{proof}[Proof of Proposition \ref{prop:samplecomplex}]
Subgaussianity of the optimizer follows directly from Lemma \ref{lem:convexorder}.

Recalling that $\inf_{\nu} \mathcal{L}_{EOT}(\nu)$ and $\inf_{\nu} \lambda^{-1}\mathcal{L}_{BA}(\nu)$ have the same values and minimizers (given by \eqref{eq:equivalence-rd-eot} in Sec.~\ref{sec:ot-perspective}), it suffices to show the claim for $\mathcal{L} = \mathcal{L}_{EOT}$. 
Let $\nu^*$ be an optimizer of \eqref{eq:master-eot-problem} (i.e., an optimal reproduction distribution) and $\nu^n$ its empirical measure from $n$ samples, then clearly
\begin{align*}
    \left| \min_{\nu_n \in \mathcal{P}_n(\mathbb{R}^d)} \mathcal{L}_{EOT}(\mu, \nu_n) - \min_{\nu \in \mathcal{P}(\mathbb{R}^d)} \mathcal{L}_{EOT}(\mu, \nu)\right| 
    &= \!\!\min_{\nu_n \in \mathcal{P}_n(\mathbb{R}^d)} \mathcal{L}_{EOT}(\mu, \nu_n) - \!\!\min_{\nu \in \mathcal{P}(\mathbb{R}^d)} \mathcal{L}_{EOT}(\mu, \nu) \\
    &\leq \mathbb{E}\left[|\mathcal{L}_{EOT}(\mu, \nu^n) - \mathcal{L}_{EOT}(\mu, \nu^*)| \right]
\end{align*}
where the expectation is taken over samples for $\nu^n$. The first inequality of Proposition \ref{prop:samplecomplex} now follows from the sample complexity result for entropic optimal transport in \cite[Theorem 2]{mena2019statistical}.

Denote by $\nu_m^*$ the optimizer for the problem \eqref{eq:master-eot-problem} with $\mu$ replaced by $\mu^m$. Similarly to the above, we obtain
\begin{multline*}
\mathbb{E}\left[ \left| \min_{\nu \in \mathcal{P}(\mathbb{R}^d)} \mathcal{L}_{EOT}(\mu, \nu) -  \min_{\nu \in \mathcal{P}(\mathbb{R}^d)} \mathcal{L}_{EOT}(\mu^m, \nu) \right|\right] \\
\leq \mathbb{E}\left[\max_{\nu \in \{\nu^*, \nu_m^*\}} \left|\mathcal{L}_{EOT}(\mu, \nu)  - \mathcal{L}_{EOT}(\mu^m, \nu)\right| \right],
\end{multline*}
where the expectation is taken over samples from $\mu^m$. In this situation, we cannot directly apply \cite[Theorem 2]{mena2019statistical}. However, the bound given by \cite[Proposition 2]{mena2019statistical} still applies, and the only dependence on $\nu \in \{\nu^*, \nu_m^*\}$ is through their subgaussianity constants. By Lemma \ref{lem:convexorder}, these constants are bounded by the corresponding constants of $\mu$ and $\mu^m$. Thus, the arguments in the proof of \cite[Theorem 2]{mena2019statistical} can be applied, yielding the second inequality of Proposition \ref{prop:samplecomplex}. 

The final inequality of Proposition \ref{prop:samplecomplex} follows from the first two inequalities (the first one being applied with $\mu^m$) and the triangle inequality, where we again use the arguments in the proof of \cite[Theorem 2]{mena2019statistical} to bound the expectation over the subgaussianity constants of~$\mu^m$.
\end{proof}

\subsection{Convergence of the proposed hybrid algorithm}\label{sec:app-hybrid-algo-convergence}
In our present work, our hybrid algorithm targets the non-stochastic setting and is motivated by the fact that the BA update is in some sense orthogonal to the Wasserstein gradient update, and can only monotonically improve the objective.  
While empirically we observe the additional BA steps to not hurt -- but rather accelerate -- the convergence of WGD (see \ref{sec:deconv}), additional effort is required to theoretically guarantee convergence of the hybrid algorithm.  

There are two key properties we use for the convergence of the base WGD algorithm: 1) a certain monotonicity of the update steps (up to higher order terms, gradient descent improves the objective) and 2) stability of gradients across iterations. If we include the BA step, we find that 1) still holds, but 2) may a-priori be lost. Indeed, 1) holds since BA updates monotonically improve the objective. Using just 1), we can still obtain a Pareto convergence of the gradients for the hybrid algorithm, $\sum_{t=0}^\infty \gamma_t \int \|\nabla V_{\mathcal{L}}(\nu^{(t)})\|^2 \,d\nu^{(t)} < \infty$ (here $\nu^{(t)}$ are the outputs from the respective BA steps and $\gamma_t$ is the step size of the gradient steps). Without property 2), we cannot conclude $\int \|\nabla V_{\mathcal{L}}(\nu^{(t)})\|^2 \,d\nu^{(t)} \rightarrow 0$ for $t\rightarrow \infty$. We emphasize that in practice, it still appears that 2) holds even after including the BA step. Motivated by this analysis, an adjusted hybrid algorithm where, e.g., the BA update is rejected if it causes a drastic change in the Wasserstein gradient, could guarantee that 2) holds with little practical changes. From a different perspective, we also believe the hybrid algorithm may be tractable to study in relation to gradient flows in the Wasserstein-Fisher-Rao geometry (cf. \citep{yan2023learning}), in which the BA step corresponds to a gradient update in the Fisher-Rao geometry with a unit step size.

In the stochastic setting, the BA (and therefore our hybrid) algorithm does not directly apply, as performing BA updates on mini-batches no longer guarantees monotonic improvement of the overall objective. Extending the BA and hybrid algorithm to this setting would be interesting future work.

\begin{table}[t]
  \caption{Guide to notation and their interpretations in various problem domains.
  ``LVM'' stands for latent variable modeling, ``NPMLE'' stands for non-parametric maximum-likelihood estimation. The R-D problem \eqref{eq:rd-lagrangian-problem} is equivalent to a projection problem under an entropic optimal transport cost (discussed in Sec.~\ref{sec:ot-perspective}) and statistical problems involving maximum-likelihood estimation (see discussion in Sec.~\ref{sec:statistical-interp} and below).
  }
  \label{tab:notation-map}
  \centering
  \begin{tabular}{ccccc}
    \toprule
    Context & $\mu = P_X$ & $\rho(x,y)$ & $K = Q_{Y|X}$ & $\nu = Q_Y$ \\
    \midrule
    OT  & source distribution  & transport cost &   ``transport plan'' & target distribution   \\
    R-D   & data distribution  & distortion criterion &  compression algorithm 
 & codebook distribution     \\
    LVM/NPMLE    & data distribution  & ``$-\log p(x|y)$''  & variational posterior & prior distribution \\
    deconvolution     & noisy measurements  & ``noise kernel''  & --- & noiseless model  \\
    \bottomrule
  \end{tabular}
\end{table}

\section{R-D estimation and variational inference/learning}\label{sec:app-rd-connection-to-variational}
In this section, we elaborate on the connection between the R-D problem \eqref{eq:rd-lagrangian-problem} and variational inference and learning in 
latent variable models, of which maximum likelihood deconvolution (discussed in Sec.~\ref{sec:statistical-interp}) can be seen as a special case. Also see Section 3 of \citep{yang2023introduction} for a related discussion in the context of lossy data compression.

To make the connections clearer to a general machine learning audience, 
we adopt notation more common in statistics and information theory. Table~\ref{tab:notation-map} summarizes the notation and the correspondence to the measure-theoretic notation used in the main text.

In statistics, a common goal is to model an (unknown) data generating distribution $P_X$ by some density model $\hat p(x)$.
In particular, here we will choose  $\hat p(x)$ to be a latent variable model, where $\setY$ takes on the role of a latent space, and $Q_Y=\nu$ is the distribution of a latent variable $Y$ (which may encapsulate the model parameters). As we shall see, 
the optimization objective defining the rate functional \eqref{eq:def-ba-functional} corresponds to an aggregate Evidence LOwer Bound (ELBO) \citep{blei2017variational}. Thus, computing the rate functional corresponds to variational inference \citep{blei2017variational} in a given model (see Sec.~\ref{sec:app-connection-to-vi}), and the parametric R-D estimation problem, i.e., 
\begin{align*}
    \inf_{\nu \in \setH} \L_{BA}(\nu),
\end{align*}
is equivalent to estimating a latent variable model using the variational EM algorithm \citep{beal2003vem} (see 
 Sec.~\ref{sec:app-connection-to-vem}). 
The variational EM algorithm can be seen as a restricted version of the BA algorithm (see Sec.~\ref{sec:app-connection-to-vem}),  whereas the  EM algorithm \citep{dempster1977maximum} shares its E-step with the BA algorithm but can differ in its M-step (see Sec.~\ref{sec:app-ba-and-em}).

\subsection{Setup}

For concreteness, fix a reference measure $\zeta$ on $\setY$, and suppose $Q_Y$ has density $q(y)$ w.r.t. $\zeta$. 
Often the latent space $\setY$ is a Euclidean space, and $q(y)$ is the usual probability density function w.r.t.~the Lebesgue measure $\zeta$; or when the latent space is discrete/countable, $\zeta$ is the counting measure and $q(y)$ is a probability mass function.
We will consider the typical parametric estimation problem and choose a particular parametric form for $Q_Y$ indexed by a parameter vector $\theta$. This defines a parametric family $\setH = \{ Q_Y^\theta : \theta \in \Theta \} $ for some parameter space $\Theta$.
Finally, suppose the distortion function $\rho$ induces a conditional likelihood density via $p(x|y) \propto e^{-\lambda \rho(x, y)}$, with a normalization constant that is constant in $y$.

These choices then result in a latent variable model specified by the joint density $q(y) p(x|y)$, and we model the data distribution with the marginal density: \footnote{To be more precise, we always fix a reference measure $\eta$ on $\setX$, so that densities such as $\hat p(x)$ and $p(x|y)$ are with respect to $\eta$. In the applications we considered in this work, $\eta$ is the Lebesgue measure on $\setX=\mathbb{R}^d$. }
\begin{align}
    \hat p(x) = \int_{\setY} p(x|y) d Q_Y(y) = \int_{\setY} p(x|y) q(y) \zeta(dy). \label{eq:lvm-density-full}
\end{align}
As an example, a Gaussian mixture model with isotropic component variances can be specified as follows. Let $Q_Y$ be a mixing distribution on $\setX = \setY = \mathbb{R}^d$ parameterized by component weights $w_{1,...,k}$ and locations $\mu_{1,...,k}$, such that $Q_Y = \sum_{k=1}^K w_k \delta_{\mu_k}$. Let $p(x|y) = \mathcal{N}(y, \sigma^2)$ be a conditional Gaussian density with mean $y$ and variance $\sigma^2$. 
Now formula \eqref{eq:lvm-density-full} gives the usual Gaussian mixture density on $\mathbb{R}^d$.

Maximum-likelihood estimation then ideally maximizes the population log (marginal) likelihood,
\begin{align}
    \E_{x \sim P_X} [ \log \hat p(x)] = \int \log \hat p(x) P_X(dx) = \int \log \left(  \int_{\setY} p(x|y) d Q_Y(y) \right) P_X(dx).  \label{eq:app-population-mle}
\end{align}

The maximum-likelihood deconvolution setup can be seen as a special case where the form of the marginal density \eqref{eq:lvm-density-full} derives from knowledge of the true data generation process, with $P_X = \alpha * \mathcal{N}(0, \frac{1}{\lambda})$ for some unknown $\alpha$ and known noise $\mathcal{N}(0, \frac{1}{\lambda})$ (i.e., the model is well-specified). 
We note in passing that the idea of estimating an unknown data distribution by adding artificial noise to it also underlies work on spread divergence \citep{zhang2020spread} and denoising score matching \citep{vincent2011connection}.

To deal with the often intractable marginal likelihood in the inner integral, we turn to variational inference and learning \citep{jordan1999introduction, wainwright2008graphical}.

\subsection{Connection to variational inference}\label{sec:app-connection-to-vi}
Given a latent variable model and any data observation $x$, a central task in Bayesian statistics is to infer the Bayesian posterior \citep{jordan1999learning}, which we formally view as a conditional distribution $Q^*_{Y|X=x}$. It is given by 
\begin{align*}
    \frac{d Q^*_{Y|X=x}(y)}{d Q_Y(y)} = \frac{p(x|y)}{ \hat p(x) },
\end{align*}
or, in terms of the density $q(y)$ of $Q_Y$, given by the following conditional density via the familiar Bayes' rule,
\begin{align*}
    q^*(y|x) = \frac{p(x|y) q(y)}{ \hat p(x) } = \frac{p(x|y) q(y)}{ \int_{\setY} p(x|y) q(y) \zeta(dy) }.
\end{align*}
Unfortunately, the true Bayesian posterior is typically intractable, as the (marginal) data likelihood in the denominator involves an often high-dimensional integral. Variational inference \citep{jordan1999introduction, wainwright2008graphical} therefore aims to approximate the true posterior by a variational distribution $Q_{Y|X=x} \in \setP(\setY)$ by minimizing their relative divergence $H(Q_{Y|X=x} | Q^*_{Y|X=x})$.
The problem is equivalent to maximizing the following lower bound on the marginal log-likelihood, known as the Evidence Lower BOund (ELBO) \citep{blei2017variational}:
\begin{align}
    \argmin_{Q_{Y|X=x}} H(Q_{Y|X=x} | Q^*_{Y|X=x}) & = \argmax_{Q_{Y|X=x}} \operatorname{ELBO}(Q_Y, x, Q_{Y|X=x}), \nonumber\\
    \operatorname{ELBO}(Q_Y, x, Q_{Y|X=x}) & = \E_{y \sim Q_{Y|X=x}} [\log p(x|y)]  - H(Q_{Y|X=x} | Q_Y) \nonumber\\
    & = \log \hat p(x) - H(Q_{Y|X=x} | Q^*_{Y|X=x}) . \label{eq:per-instance-elbo-maximization}
\end{align}

Recall the rate functional \eqref{eq:def-ba-functional} arises through an optimization problem over a transition kernel $K$,
\[
\L_{BA}(\nu) =  \inf_{K} \lambda \int \rho d (\mu \otimes K)  +  H(\mu \otimes K | \mu \otimes \nu).
\]
Translating the above into the present notation ($\mu \to P_X, K \to Q_{Y|X}, \nu \to Q_Y$; see Table~\ref{tab:notation-map}), we obtain
\begin{align}
    \L_{BA}(Q_Y) &= \inf_{Q_{Y|X}} \E_{x \sim P_X, y \sim Q_{Y|X=x}} [- \log p(x|y)] + \E_{x \sim P_X} [ H(Q_{Y|X=x} | Q_Y) ] + \text{const} \nonumber\\
    &= \inf_{Q_{Y|X}} \E_{x \sim P_X} [- \operatorname{ELBO}(Q_Y, x, Q_{Y|X=x})] + \text{const}.  \label{eq:def-rate-functional-rvs}
\end{align}
which allows us to interpret the rate functional as the result of performing variational inference through ELBO optimization. 
At optimality, $Q_{Y|X} = Q^*_{Y|X}$, the ELBO \eqref{eq:per-instance-elbo-maximization} is tight and recovers $\log \hat p(x)$, and the rate functional takes on the form of a (negated) population marginal log likelihood \eqref{eq:app-population-mle}, as given earlier by \eqref{eq:ba-mle-objective} in Sec.~\ref{sec:setup}.

\subsection{Connection to variational EM}\label{sec:app-connection-to-vem}
The discussion so far concerns \emph{probabilistic inference}, where a latent variable model $(Q_Y, p(x|y))$ has been given and we saw that computing the rate functional amounts to variational inference. 
Suppose now we wish to \emph{learn} a model from data.  The R-D problem \eqref{eq:master-ba-problem} then corresponds to model estimation using the variational EM algorithm \citep{beal2003vem}. 

To estimate a latent variable model by (approximate) maximum-likelihood, the variational EM algorithm maximizes the population ELBO
\begin{align}
     & \E_{x \sim P_X} [\operatorname{ELBO}(Q_Y, x, Q_{Y|X=x})] = \E_{x \sim P_X, y \sim Q_{Y|X=x}} [\log p(x|y)] - \E_{x \sim P_X} [ H(Q_{Y|X=x} | Q_Y) ] , \label{eq:population-elbo} %
\end{align}
w.r.t.\ $Q_Y$ and $Q_{Y|X}$.
This precisely corresponds to the R-D problem $\inf_{Q_Y \in \setH} \L_{BA}(Q_Y)$, using the form of $\L_{BA}(Q_Y)$ from \eqref{eq:def-rate-functional-rvs}.

In popular implementations of variational EM such as the VAE \citep{kingma2013auto}, $Q_Y$ and $Q_{Y|X}$ are restricted to parametric families. When they are allowed to range over all of $\setP(\setY)$ and all conditional distributions, variational EM then becomes equivalent to the BA algorithm.

\subsection{The Blahut--Arimoto and (exact) EM algorithms}\label{sec:app-ba-and-em}

Here we focus on the distinction between the BA algorithm and the (exact) EM algorithm \citep{dempster1977maximum}, rather than the \emph{variational EM} algorithm. 
Both BA and (exact) EM perform coordinate descent on the same objective function (namely the negative of the population ELBO from \eqref{eq:population-elbo}), but they define the coordinates slightly differently --- the BA algorithm uses $(Q_{Y|X}, Q_Y)$ with $Q_Y \in \setP(\setY)$, whereas the EM algorithm uses $(Q_{Y|X}, \theta)$ with $\theta$ indexing a parametric family $\setH = \{ Q_Y^\theta : \theta \in \Theta \}$. Thus the coordinate update w.r.t. $Q_{Y|X}$ (the ``E-step'') is the same in both algorithms, but the subseuquent ``M-step'' potentially differs depending on the role of $\theta$. 

Given the optimization objective, which is simply the negative of \eqref{eq:population-elbo}, %
\begin{align}
    \E_{x \sim P_X, y \sim Q_{Y|X=x}} [-\log p(x | y) ]  + H(P_X Q_{Y|X}| P_X \otimes Q_Y),  \label{eq:rate-functional-obj}
\end{align}
both BA and EM optimize the transition kernel $Q_{Y|X}$ the same way in the E-step, as
\begin{align}
    \frac{d Q^*_{Y|X=x}}{d Q_Y}(y) = \frac{p(x|y)}{\hat p(x)}. \label{eq:e-step-update}
\end{align}

For the M-step, the BA algorithm only minimizes the relative entropy term of the objective \eqref{eq:rate-functional-obj},
\begin{align*}
    \min_{Q_Y \in \setP(\setY)} H(P_X Q^*_{Y|X}; P_X \otimes Q_Y),
\end{align*}
(with the optimal $Q_Y$ given by the second marginal of $P_X Q^*_{Y|X}$)
whereas the EM algorithm minimizes the full objective w.r.t. the parameters $\theta$ of $Q_Y$,
\begin{align}
    \min_{\theta \in \Theta} \E_{(x, y) \sim P_X Q^*_{Y|X}} [-\log p(x | y) ]  + H(P_X Q^*_{Y|X}; P_X \otimes Q_Y). \label{eq:tightened-nelbo}
\end{align}
The difference comes from the fact that when we parameterize $Q_Y$ by $\theta$ in the parameter estimation problem, 
$Q^*_{Y|X}$ --- and consequently both terms in the objective of \eqref{eq:tightened-nelbo} --- will have functional dependence on $\theta$ through the E-step optimality condition \eqref{eq:e-step-update}.

In the Gaussian mixture example, $Q_Y = \sum_{k=1}^K w_k \delta_{\mu_k}$, and its parameters $\theta$ consist of the components weights $(w_1, ..., w_K) \in \Delta^{d-1}$ and location vectors $\{\mu_1, ..., \mu_K\} \subset \mathbb{R}^d$.
The E-step computes $Q^*_{Y|X=x} = \sum_k w_k \frac{p(x|\mu_k)}{p(x)} \delta_{\mu_k}$. 
For the M-step, if we regard the locations as known so that $\theta = (w_1, ..., w_K)$ only consists of the weights, then the two algorithms perform the same update; however if $\theta$ also includes the locations, then the M-step of the EM algorithm will not only update the weights as in the BA algorithm, but also the locations, due to the distortion term $\E_{(x, y) \sim P_X Q^*_{Y|X}} [-\log p(x | y) ] =  - \int  \sum_k w_k \frac{p(x|\mu_k)}{p(x)} \log p(x| \mu_k) P_X(dx)$.

\section{Further experimental results}\label{sec:app-further-experiments}

Our deconvolution experiments were run on Intel(R) Xeon(R) CPUs, and the rest of the experiments were run on  Titan RTX GPUs. 

In most experiments, we use the Adam \citep{kingma2015adam} optimizer for updating the $\nu$ particle locations in WGD and for updating the variational parameters in other methods. For our hybrid WGD algorithm, which adjusts the particle weights in addition to their locations, we found that applying momentum to the particle locations can in fact slow down convergence, and therefore use plain gradient descent with a decaying step size. 

To trace an R-D upper bound for a given source, we solve the unconstrained R-D problem \ref{eq:rd-lagrangian-problem} for a heuristically-chosen grid of $\lambda$ values similarly to those in \citep{yang2022towards}, e.g., on a log-linear grid $\{1e4, 3e3, 1e3, 3e2, 1e2, ... \}$. Alternatively, a constrained optimization method like the Modified Differential Multiplier Method (MDMM) \citep{platt1987constrained} can be adopted to directly target $R(D)$ for various values of $D$s. The latter approach will also help us identify when we run into the $\log(n)$ rate limit of particle-based methods (Sec.~\ref{sec:estimation-of-r-d}): suppose we perform constrained optimization with a distortion threshold of $D$; if the chosen $n$ is not large enough, i.e.,  $\log(n) < R(D)$, then no $\nu$ supported on (at most) $n$ points is feasible for the given constraint (otherwise we have a contradiction). When this happens, the Lagrange multiplier associated with the infeasiable constraint ($\lambda$ in our case) will be observed to increase indefinitely rather than stabalize to a finite value under a method like MDMM.

\subsection{Deconvolution}
\paragraph{Architectures for the neural baselines}
For the RD-VAE, we used a similar architecture as the one used on the banana-shaped source in \citep{yang2022towards}, consisting of two-layer MLPs for the encoder and decoder networks, and Masked Autoregressive Flow \citep{papamakarios2017masked} for the variational prior. For NERD, we follow similar architecture settings as \citep{lei2023neural}, using a two-layer MLP for the decoder network. 
Following \citet{yang2022towards}, we initially used softplus activation for the MLP, but found it made the optimization difficult; therefore we switched to ReLU activation which gave much better results. We conjecture that the softplus activation led to overly smooth mappings, which had difficulty matching the optimal $\nu^*=\alpha$ measure which is concentrated on the unit circle and does not have a Lebesgue density.

\paragraph{Experiment setup}
As discussed in Sec.~\ref{sec:hybrid-algo}, BA and our hybrid algorithms do not apply to the stochastic setting; to be able to include them in our comparison, the input to all the algorithms is an empirical measure $\mu^m$ (training distribution) with $m=100000$ samples, given the same fixed seed. We found the number of training samples is sufficiently large that the losses do not differ significantly on the training distribution v.s. random/held-out samples.

Recall from Sec.~\ref{sec:statistical-interp}, given data observations following $\mu = \alpha * \mathcal{N}(0, \sigma^2)$, if we perform deconvolution with an assumed noise variance $\frac{1}{\lambda}$ for some $\frac{1}{\lambda} \leq \sigma^2$, then the optimal solution to the problem is given by $\nu^* = \alpha_\lambda = \alpha * \mathcal{N}(0, \frac{1}{\lambda})$. We compute the optimal loss $OPT = \L_{BA}(\nu^*)$ by a Monte-Carlo estimate, using formula \eqref{eq:sample-mean-estimate-for-rate-functional} with $m=10^4$ samples from $\mu$ and $n=10^6$ samples from $\nu^*$. The resulting $\hat \L_{BA}$ is positively biased (it overestimates the $OPT$ in expectation) due to the bias of the plug-in estimator for $\varphi(x)$ \eqref{eq:phi-x-from-ba-e-step}, so we use a large $n$ to reduce this bias. \footnote{Another, more sophisticated solution would be annealed importance sampling \citep{grosse2015sandwiching} or a related method developed for estimating marginal likelihoods and partition functions.} Note the same issue occurs in the NERD algorithm \eqref{eq:nerd}.

\begin{figure}[th]
    \centering
    \includegraphics[width=0.8\linewidth]{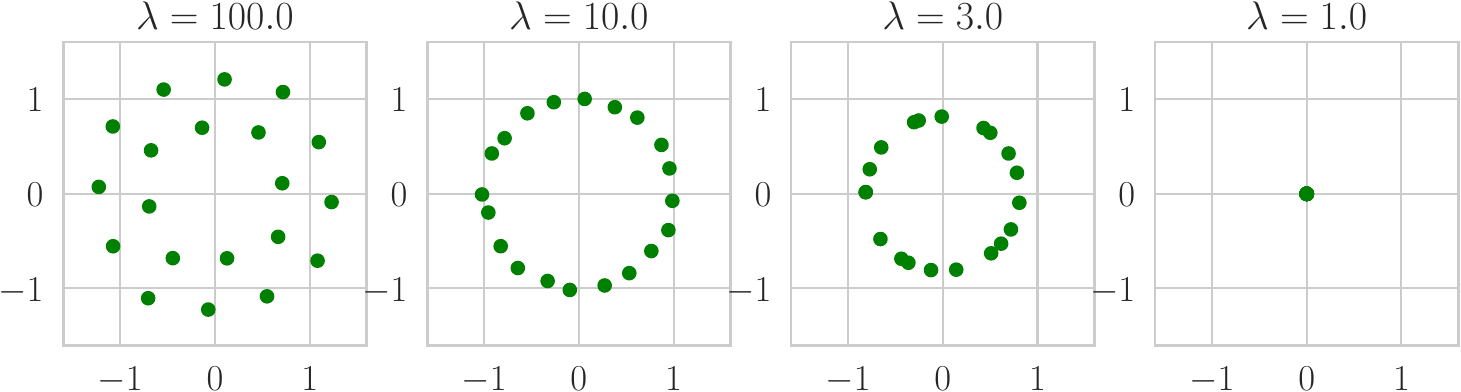}
    \includegraphics[width=0.8\linewidth]{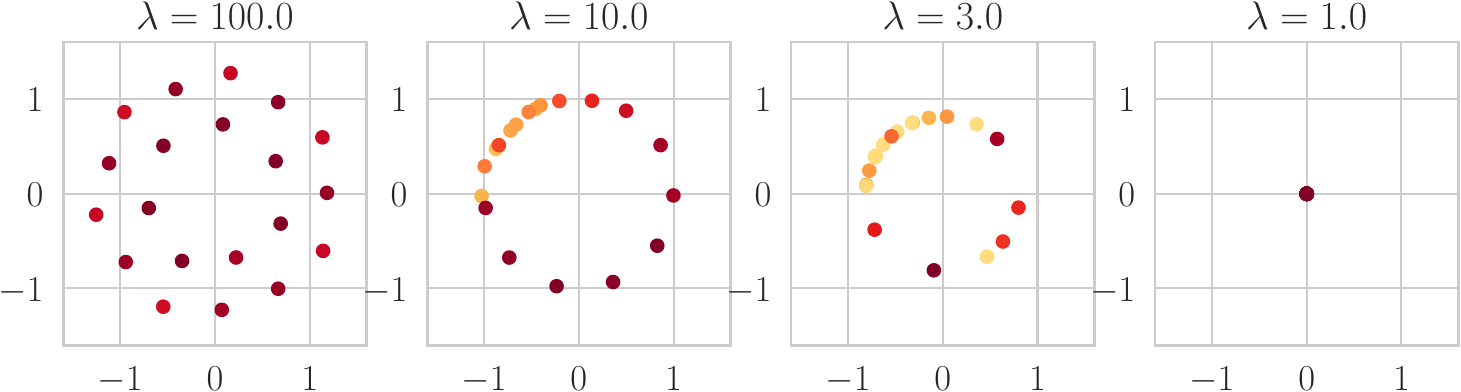}
    \caption{Visualizing the converged $n=20$ particles  of WGD (top row) and hybrid WGD (bottom row) estimated on $m=10000$ source samples in the deconvolution problem (Sec.~\ref{sec:deconv}), for decreasing distortion penalty $\lambda$. The case $\lambda=10.0 = \sigma^{-2}$ corresponds to ``complete noise removal'' of the source and recovers the ground truth $\alpha$ (unit circle).  
    }
    \label{fig:wgd-quantizers}
\end{figure}

\paragraph{Optimized reproduction distributions}
To visualize the (continuous) $\nu$ learned by RD-VAE and NERD, we fit a kernel density estimator on 10000 $\nu$ samples drawn from each, using \texttt{seaborn.kdeplot}. 

In Figure~\ref{fig:wgd-quantizers}, we provide additional visualization for the particles estimated from the training samples by WGD and its hybrid variant, for a decreasing distortion penalty $\lambda$ (equivalently, increasing entropic regularization strength $\epsilon$).

\subsection{Higher-dimensional datasets}
For NERD, we set $n$ to the default $40000$ in the code provided by \citep{lei2023neural}, on all three datasets.

For WGD, we used $n=4000$ on \emph{physics},
200000 on \emph{speech}, and 40000 on MNIST (see also smaller $n$ for comparison in Fig.~\ref{fig:rd-physics-speech}). 

On \emph{speech}, both NERD and WGD encountered the issue of a $\log(n)$ upper bound on the rate estimate as described in Sec.~\ref{sec:estimation-of-r-d}. Therefore, we increased $n$ to 200000 for WGD and obtain a tighter R-D upper bound than NERD, particularly in the low-distortion regime. Such a large $n$ incurred out-of-memory error for NERD. 

We borrow the R-D upper and lower bound results of \citep{yang2022towards} from their paper on \emph{physics} and \emph{speech}. 
We obtain sandwich bounds on MNIST using a similar neural network architecture and other hyperparameter choices as in the GAN-generated image experiments of \citep{yang2022towards} (using a ResNet-VAE for the upper bound and a CNN for the lower bound), except we set a larger $k=10000$ in the lower bound experiment; we suspect the resulting lower bound is still far from being tight.

\section{Example implementation of WGD}\label{sec:app-wgd-code-listing}
We provide a self-contained minimal implementation of Wasserstein gradient descent on $\L_{BA}$, using the Jax library \citep{jax2018github}.
To compute the Wasserstein gradient,  we evaluate the first variation of the rate functional in \texttt{compute\_psi\_sum} according to \eqref{eq:app-first-variation-of-rate-functional}, yielding $\sum_{i=1}^n \psi^\nu(x_i)$, then simply take its gradient w.r.t. the particle locations $x_{1,...n}$ using Jax's autodiff tool on line 51.

The implementation of WGD on $\L_{EOT}$ is similar, except the first variation is computed using Sinkhorn's algorithm. Both versions can be easily just-in-time compiled and enjoy GPU acceleration.

\inputminted[linenos]{python}{wgd_example.py}

\clearpage

\end{document}